\title{A Note on Max $k$-Vertex Cover: Faster FPT-AS, Smaller Approximate Kernel and Improved Approximation}
\author{
Pasin Manurangsi\thanks{UC Berkeley. Email: \texttt{pasin@berkeley.edu}. Supported by NSF under Grants No. CCF 1655215 and CCF 1815434.}
}
\begin{document}
\clearpage
\maketitle
\thispagestyle{empty}

\begin{abstract}
In Maximum $k$-Vertex Cover (Max $k$-VC), the input is an edge-weighted graph $G$ and an integer $k$, and the goal is to find a subset $S$ of $k$ vertices that maximizes the total weight of edges covered by $S$. Here we say that an edge is covered by $S$ iff at least one of its endpoints lies in $S$.

We present an FPT approximation scheme (FPT-AS) that runs in $(1/\varepsilon)^{O(k)} \poly(n)$ time for the problem, which improves upon Gupta, Lee and Li's $(k/\varepsilon)^{O(k)} \poly(n)$-time FPT-AS~\cite{GLL18a,GLL18}. Our algorithm is simple: just use brute force to find the best $k$-vertex subset among the $O(k/\varepsilon)$ vertices with maximum weighted degrees.

Our algorithm naturally yields an (efficient) approximate kernelization scheme of $O(k/\varepsilon)$ vertices; previously, an $O(k^5/\varepsilon^2)$-vertex approximate kernel is only known for the unweighted version of Max $k$-VC~\cite{LPRS17}. Interestingly, this also has an application outside of parameterized complexity: using our approximate kernelization as a preprocessing step, we can directly apply Raghavendra and Tan's SDP-based algorithm for 2SAT with cardinality constraint~\cite{RT12} to give an $0.92$-approximation algorithm for Max $k$-VC in polynomial time. This improves upon the best known polynomial time approximation algorithm of Feige and Langberg~\cite{FL01} which yields $(0.75 + \delta)$-approximation for some (small and unspecified) constant $\delta > 0$.

We also consider the minimization version of the problem (called Min $k$-VC), where the goal is to find a set $S$ of $k$ vertices that minimizes the total weight of edges covered by $S$. We provide a FPT-AS for Min $k$-VC with similar running time of $(1/\varepsilon)^{O(k)} \poly(n)$. Once again, this improves on a $(k/\varepsilon)^{O(k)} \poly(n)$-time FPT-AS of Gupta et al. On the other hand, we show, assuming a variant of the Small Set Expansion Hypothesis~\cite{RS10} and $\NP \nsubseteq \coNP/\poly$, that there is no polynomial size approximate kernelization for Min $k$-VC for any factor less than two.
\end{abstract}

\newpage
\setcounter{page}{1}
\section{Introduction}

In the \emph{Vertex Cover} problem, we are given a graph $G$ and an integer $k$, and the goal is to determine whether there is a set $S$ of $k$ vertices that covers all the edges, where the edge is said to be covered by $S$ if at least one of its endpoints lies in $S$. Vertex Cover is a classic graph problem and is among Karp's original list of 21 NP-complete problems~\cite{Karp72}. This NP-hardness has led to studies of variants of the problems. One such direction is to consider the optimization versions of the problem. Arguably, the two most natural optimization formulations of Vertex Cover are the \emph{Minimum Vertex Cover (Min VC)} problem, where the constraint that every edge is covered is treated as a hard constraint and the goal is to find $S$ with smallest size that satisfies this, and the \emph{Maximum $k$-Vertex Cover (Max $k$-VC)} problem\footnote{Max $k$-VC and Min $k$-VC (which will be introduced below) are sometimes referred to as the Max Partial Vertex Cover and Min Partial Vertex Cover respectively. However, we decide against calling them as such to avoid ambiguity since Partial Vertex Cover has also used to refer to a different variant of Vertex Cover (see e.g.~\cite{Bar-Yehuda01}).}, where the cardinality constraint $|S| = k$ is treated as a hard constraint and the goal is to find such $S$ that covers as many edges as possible.

Both problems have been thoroughly studied in the approximation algorithms and hardness of approximation literature. Min VC admits a simple greedy 2-approximation algorithm\footnote{Throughout this note, we use the convention that the approximation ratio is the worst case ratio between the cost of the output solution and the optimum. In other words, the approximation ratios for maximization problems will be at most one, whereas the approximation ratios for minimization problems will be at least one.}, which has been known since the seventies (see e.g.~\cite{GJ79}). The approximation ratio has subsequently been slightly improved~\cite{BE85,MS85} and, currently, the best known approximation ratio in polynomial time is $(2 - 1/O(\sqrt{\log n}))$~\cite{Kar09}. There has also been a number of works on hardness of approximation of Vertex Cover~\cite{BGS98,Hastad01,DS05,KR08,BK09,KMS17,KMS18}. The best known NP-hardness of approximation for Min VC, established in the recent works that resolve the (imperfect) 2-to-1 conjecture~\cite{KMS17,DKKMS16,DKKMS17,KMS18}, has a factor of $(\sqrt{2} - \varepsilon)$ for any $\varepsilon > 0$. Assuming the Unique Games Conjecture (UGC)~\cite{Kho02}, the inapproximability ratio can be improved to $(2 - \varepsilon)$ for any $\varepsilon > 0$~\cite{KR08,BK09}, which is tight up to lower order terms.

Unlike Min VC, tight approximability results for Max $k$-VC are not known (even assuming UGC). In particular, on the algorithmic front, the best known efficient approximation algorithm due to Feige and Langberg~\cite{FL01} yields a $(0.75 + \delta)$-approximation for the problem, where $\delta > 0$ is a (small) constant. This was an improvement over an earlier 0.75-approximation algorithm of Ageev and Sviridenko~\cite{AS04}, which in turn improved upon the simple greedy algorithm that yields $(1 - 1/e)$-approximation for the problem~\cite{Hoch96}. (See also~\cite{HanYZ02,HanYZZ02,HalperinZ02} where improvements have been made for certain ranges of $k$ and $n$.) On the hardness of approximation front, it is known that the problem is NP-hard to approximate to within $(1 + \delta)$ factor for some (small) $\delta > 0$~\cite{Patrank94}. Moreover, it follows from a result of Austrin, Khot and Safra~\cite{AKS11} that it is UG-hard to approximate the problem to within a factor of 0.944. (See Appendix~\ref{app:aks}.) This leaves quite a large gap between the upper and lower bounds, even assuming the UGC.

Approximability is not the only aspect of Vertex Cover and its variants that has been thoroughly explored: its parameterized complexity is also a well-studied subject. Recall that an algorithm is said to be \emph{fixed-parameter (FPT)} with respect to parameter $k$ if it runs in time $f(k) \cdot \poly(n)$ for some function $f$, where $n$ is the size of the input. An FPT algorithm (with running time $k^{O(k)}\cdot \poly(n)$) was first devised for Vertex Cover by Buss and Goldsmith~\cite{BussG93}. Since then, many different FPT algorithms have been discovered for Vertex Cover; to the best of our knowledge, the fastest known algorithm is that of Chen, Kanj and Xia~\cite{ChenKX10}, which runs in $1.2738^k \cdot \poly(n)$ time.

Notice that an FPT algorithm for Vertex Cover can also be adapted to solve Min VC in FPT time parameterized by the optimal solution size, by running the Vertex Cover algorithm for $k = 1, 2, \dots$ until it finds the size of the optimal solution. On the other hand, Max $k$-VC is unlikely to admit an FPT algorithm, as it is \W[1]-hard~\cite{GuoNW07}. Circumventing this hardness, Marx~\cite{Marx08} designed an \emph{FPT approximation scheme (FPT-AS)}, which is an FPT algorithm that can achieve approximation ratio $(1 - \varepsilon)$ (or $(1 + \varepsilon)$ for minimization problems) for any $\varepsilon > 0$, for Max $k$-VC. In particular, his algorithm runs in time $(k/\varepsilon)^{O(k^3/\varepsilon)} \cdot \poly(n)$. This should be contrasted with the aforementioned fact that Max $k$-VC does not admit a PTAS unless $\P = \NP$. Recently, the FPT-AS has been sped up by Gupta, Lee and Li~\cite{GLL18a,GLL18}\footnote{In fact, Gupta \etal gives an FPT-AS for Min $k$-VC; it is trivial to see that their algorithm works for Max $k$-VC as well.} to run in time $(k/\varepsilon)^{O(k)} \cdot \poly(n)$.

FPT algorithms are intimately connected to the notion of kernel. A \emph{kernelization algorithm} (or \emph{kernel}) of a parameterized problem is a polynomial time algorithm that, given an instance $(I, k)$, produces another instance $(I', k')$ such that the size of the new instance $|I'|$ and the new parameter $k'$ are both bounded by $g(k)$ for some function $g$. It is well known that a parameterized problem admits a kernel if and only if it admits FPT algorithms~\cite{CaiCDF97}. Once again, many kernels are known for Vertex Cover (see e.g.~\cite{Abu-KhzamCFLSS04} and references therein). On the other hand, the \W[1]-hardness of Max $k$-VC means that it does not admit a kernel unless $\W[1] = \FPT$.

Recently, there have been attempts to make the concept of kernelization compatible with approximation algorithms~\cite{FKRS18,LPRS17}. In this note, we follow the notations defined by Lokshtanov \etal~\cite{LPRS17}. For our purpose, it suffices to define an \emph{$\alpha$-approximate kernel} for an parameterized optimization problem as a pair of polynomial time algorithms $\cA$, the \emph{reduction algorithm}, and $\cB$, the \emph{solution lifting algorithm}, such that (i) given an instance $(I, k)$, $\cA$ produces another instance $(I', k')$ such that $|I'|, k'$ are bounded by $g(k)$ for some $g$ and (ii) given an $\beta$-approximate solution $s'$ for $(I', k')$, $\cB$ produces a solution $s$ of $(I, k)$ such that $s$ is an $(\alpha\beta)$-approximate solution\footnote{We use a similar convention here as our convention for approximation ratios. That is, $\alpha \leqs 1$ for maximization problems and $\alpha \geqs 1$ for minimization problems. Note that this is not the same as in~\cite{LPRS17} where $\alpha \geqs 1$ in both cases; nevertheless, it is simple to see that these different conventions do not effect any of the results.} for $(I, k)$. Akin to (exact) kernelization, Lokshtanov \etal~\cite{LPRS17} shows that a decidable parameterized optimization problem admits $\alpha$-approximate kernel if and only if it admits an FPT $\alpha$-approximation algorithm. (We refer interested readers to Section 2.1 of~\cite{LPRS17} for more details.) In light of Marx's algorithm for Max $k$-VC~\cite{Marx08}, this immediately implies that Max $k$-VC admits $(1 - \varepsilon)$-approximate kernel for any $\varepsilon > 0$. Lokshtanov \etal~\cite{LPRS17} made this bound more specific, by showing that the insights from Marx's work can be turned into an $(1 - \varepsilon)$-approximate kernel where the number of vertices in the new instance is at most $O(k^5/\varepsilon^2)$.

{\bf Minimum $k$-Vertex Cover.} We will also consider the minimization variant of the Min $k$-VC, which we call \emph{Minimum $k$-Vertex Cover (Min $k$-VC)}. The goal of this problem is to find a subset of $k$ vertices that \emph{minimizes} the number of edges covered. Note that this is \emph{not} a natural relaxation of Vertex Cover and is in fact more closely related to edge expansion problems. (See~\cite{GandhiK15} and discussion therein for more information.) The greedy algorithm that picks $k$ vertices with minimum degrees yields a 2-approximation. Gandhi and Kortsarz~\cite{GandhiK15} showed that this is likely tight: assuming the Small Set Expansion Conjecture~\cite{RS10}, it is hard to approximate Min $k$-VC to within $(2 - \varepsilon)$ factor for any $\varepsilon > 0$. As for its parameterized complexity, similar to Max $k$-VC, Min $k$-VC is \W[1]-hard~\cite{GuoNW07} and admits an FPT-AS with running time $(k/\varepsilon)^{O(k)}$~\cite{GLL18a,GLL18}.

{\bf Weight vs Unweighted.} All results stated above are for unweighted graphs. The natural extensions of Max $k$-VC (resp. Min $k$-VC) to edge-weighted graphs ask to find subsets of vertices of size $k$ that maximizes (resp. minimizes) the total weight of the edges covered. To avoid confusion, we refer to these weighted variants explicitly as Weighed Max $k$-VC and Weight Min $k$-VC. Clearly, since these problems are more general than the unweighted ones, the lower bounds above (including inapproximability results and \W[1]-hardness) applies immediately. It is also quite simple to check that all aforementioned polynomial time approximation algorithms for the unweighted case extends naturally to the weighted setting too. The FPT-ASes are slightly trickier, but Gupta \etal~\cite{GLL18a} provide an argument discretizing the weights and extend their FPT-ASes to the weighted case with similar time complexity. It is also possible to apply this argument to Lokshtanov \etal's~\cite{LPRS17} approximate kernel, although it would result in a graph of $O(k^7/\varepsilon^4)$ vertices instead of $O(k^5/\varepsilon^2)$ for the unweighted case.

\subsection{Our Results} \label{sec:results}

For convenience, all our results stated below are for the weighted version of the problems, and moreover we allow self-loops in the input graph. This is the most general version of the problem and, hence, the algorithmic results below apply directly to the unweighted case (nd the weighted simple graph case. We also note that this choice is partly motivated by the fact that in some applications, such Gupta \etal's~\cite{GLL18a,GLL18} algorithms for Minimum $k$-Cut, this full generality is needed. (Unfortunately, our result does not imply faster algorithms for Minimum $k$-Cut, as the bottlenecks of Gupta \etal's approach is elsewhere\footnote{For their FPT approximation algorithm~\cite{GLL18a}, the bottleneck is in the reduction from Min $k$-Cut to Laminar $k$-Cut which runs in time $2^{O(k^2)} \cdot \poly(n)$. For their $(1 + \varepsilon)$-approximation algorithm~\cite{GLL18}, the bottleneck is in the dynamic programming step which takes $(k/\varepsilon)^{O(k)}\cdot\poly(n)$ time.}.)

We remark that, while the algorithmic results apply directly to the more restricted version, the approximate kernel does \emph{not}. This is because, in a more restricted version (e.g. unweighted) of the problems, the instance output by the reduction algorithm is also more restrictive (e.g. unweighted), meaning that one cannot simply use the approximate kernel for the more general version. Nevertheless, as we will point out below, our approximate kernel also extends to the unweighted setting (and simple graph setting), with a small loss in parameter. 

\subsection*{Maximum $k$-Vertex Cover}

Our first result is a faster FPT-AS for Max $k$-VC that runs in time $O(1/\varepsilon)^k \cdot \poly(n)$, which improves upon a $(k/\varepsilon)^{k} \cdot \poly(n)$-time FPT-AS due to Gupta, Lee and Li~\cite{GLL18}.

\begin{theorem} \label{thm:fpt-as}
For every $\varepsilon > 0$, there exists an $(1 - \varepsilon)$-approximation algorithm for Weighted Max $k$-VC that runs in time $O(1/\varepsilon)^k \cdot \poly(n)$.
\end{theorem}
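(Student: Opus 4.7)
My plan is to analyze the simple algorithm sketched in the abstract: compute the weighted degree $w(v) = \sum_{e \ni v} w_e$ of every vertex, let $T$ be the set of the $t := \lceil C k/\varepsilon \rceil$ vertices of largest weighted degree (for a constant $C$ to be fixed), and output the best $k$-subset of $T$ by brute force. The running time is immediate: $\binom{t}{k} \le (et/k)^k = O(1/\varepsilon)^k$ subsets, each evaluated in $\poly(n)$ time.

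For correctness, I would fix an optimal solution $S^*$, set $H = S^* \cap T$, $L = S^* \setminus T$, $\ell = |L|$, and $m = |T \setminus H| = t - k + \ell$, and then prove the probabilistic statement that the random set $S' := H \cup A$, where $A$ is a uniformly random $\ell$-subset of $T \setminus H$, satisfies $\mathbf{E}[w(S')] \ge (1 - \varepsilon)\,w(S^*)$. This suffices, since the brute-force search then finds a deterministic $S' \subseteq T$ at least as good as the expectation.

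The main analytical step is to classify each edge by its intersection with $H$, $L$, and $T \setminus H$: edges touching $H$ contribute identically to $w(S^*)$ and $w(S')$, and the remaining edges split into those covered by $L$ in $S^*$ and those in neither solution. Writing out $\Pr[e \cap A \ne \emptyset]$ case by case and letting $d_T$ denote the minimum weighted degree in $T$, I expect to arrive at an inequality of the form
\[
w(S^*) - \mathbf{E}[w(S')] \le \sum_{v \in L} w(v) \;-\; \tfrac{\ell}{m}\sum_{v \in T \setminus H} w(v) \;+\; \tfrac{\ell}{m}\,e(H, T \setminus H) \;+\; O\!\left(\tfrac{\ell^2}{m^2}\right) e(T \setminus H),
\]
where $e(X,Y)$ and $e(X)$ denote total weight of edges across two sets and inside one set. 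The central observation is that $w(v) \le d_T$ for every $v \in L \subseteq V \setminus T$ while $w(v) \ge d_T$ for every $v \in T \setminus H \subseteq T$, so $\sum_L w(v) \le \ell d_T \le (\ell/m)\sum_{T \setminus H} w(v)$ and the two main terms cancel.

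The hard part will be bounding the two correction terms by $O(\varepsilon\,w(S^*))$. The key auxiliary bound I plan to use is $\sum_{v \in T} w(v) = O(w(S^*)/\varepsilon)$: since $\sum_{v \in U} w(v) \le 2\,w(U) \le 2\,w(S^*)$ for $U$ the top $k$ vertices (because $e(U) \le w(U)$), every vertex in $T \setminus U$ has weighted degree at most $2\,w(S^*)/k$, giving $\sum_{v \in T} w(v) \le 2\,w(S^*) + (t-k)\cdot(2\,w(S^*)/k) = O(w(S^*)/\varepsilon)$. Combined with $\ell/m = O(\varepsilon)$ from our choice of $t$, and $e(H, T \setminus H) \le \sum_{v \in H} w(v) \le 2\,w(S^*)$, both corrections fall into $O(\varepsilon\,w(S^*))$, which gives the desired $(1-\varepsilon)$-approximation once $C$ is taken sufficiently large.
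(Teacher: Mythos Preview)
Your proposal is correct and follows essentially the same approach as the paper: both take the top $O(k/\varepsilon)$ vertices by weighted degree, replace the ``outside'' part $L$ of the optimum by a uniformly random $\ell$-subset $A$ of $T\setminus H$, and show $\mathbf{E}[w(H\cup A)]\ge(1-\varepsilon)\,w(S^*)$ via the degree comparison $\wdeg(v)\ge d_T\ge\wdeg(v')$ for $v\in T\setminus H$, $v'\in L$ together with the bound $\ell/m\le\varepsilon$. The only bookkeeping difference is that the paper folds the double-counting correction directly into a multiplicative $(1-\rho/2)$ factor on $\sum_{u\in T\setminus H}\wdeg(u)$ (yielding the clean constant $n'=k+\lceil k/\varepsilon\rceil$), whereas you isolate it as a separate $O(\ell^2/m^2)\,e(T\setminus H)$ term and then invoke the auxiliary bound $\sum_{v\in T}\wdeg(v)=O(w(S^*)/\varepsilon)$; this extra step is sound but unnecessary, and is the reason you carry an unspecified constant $C$.
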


Perhaps more importantly, our FPT-AS is simple and yields a new insight compared to the previous FPT-ASes~\cite{Marx08,GLL18a,GLL18}. In particular, our algorithm is just the following: restrict ourselves only to the $O(k/\varepsilon)$ vertices with maximum weighted degrees and use brute force to find a $k$-vertex subset among these vertices that cover edges with maximum total weight.

To demonstrate the differences to the previous algorithms, let us briefly sketch how they work here. The known FPT-ASes~\cite{Marx08,GLL18a,GLL18} all rely on a degree-based argument for the unweighted case due to Marx~\cite{Marx08} who consider the following two cases:
\begin{enumerate}
\item The vertex with maximum degree have degree at least $k^2/\varepsilon$. In this case, one can simply take the $k$ vertices with largest degree because the number of edges with both endpoints in the set is at most $\binom{k}{2}$, meaning that it only affects the number of edges covered by at most an $\varepsilon$ factor and thus this is already an $(1 - \varepsilon)$-approximation for the problem.
\item The vertex with maximum degree have degree at most $k^2/\varepsilon$. The key property in this case is that the number of edges covered by the optimal solution is at most $k^3/\varepsilon$, which is bounded by a function of $k$. Marx's algorithm then proceeds as follows: (i) guess the number of edges $\ell \leqs k^3/\varepsilon$ in the optimal solution, (ii) guess (among the $k^\ell$ possibilities) which vertex (in the solution) that each edge is covered by, (iii) randomly color each edge in the input graph with one of $\ell$ colors and randomly color each vertex with one of $k$ colors and (iv) finally, determine whether there are $k$ vertices each of different color that covers edges with colors as guessed in Step (ii). Note that Step (iv) can be easily done in polynomial time. Since $\ell$ is bounded by $k^3/\varepsilon$, the algorithm succeeds with probability at least $k^{-O(k^3/\varepsilon)}$, which can be turned into a randomized algorithm with running time $k^{-O(k^3/\varepsilon)} \cdot \poly(n)$ that succeeds with high probability. Finally, it can be derandomized using standard techniques (see~\cite{AYZ95}). 
\end{enumerate}
The speed-up of Gupta \etal~\cite{GLL18a,GLL18} comes from the change in the second case. Roughly speaking, they show that more elaborated coloring techniques can be used, in conjunction with dynamic programming, to speed the second case up to $(k/\varepsilon)^{O(k)} \cdot \poly(n)$.

Intuitively, our result shows that this case-based analysis is in fact not needed, as it suffices to consider the $O(k/\varepsilon)$ vertices with highest weighted degrees. Moreover, a nice feature about our algorithm is that it works naturally for the weighted case, whereas Gupta \etal\ needs to employ a discretization argument to deal with this case. (See Section 5.2 in the full version of~\cite{GLL18a}.)

Another feature of our algorithm is that it immediately gives an approximate kernelization for the problem, by restricting to the subgraph induced by the $O(k/\varepsilon)$ vertices and adding self-loops with appropriate weights to compensate the edges from these vertices to the remaining vertices. This results in an $(1 - \varepsilon)$-approximate kernelization of $O(k/\varepsilon)$ vertices for Max $k$-VC: 

\begin{lemma} \label{lem:main}
For every $\varepsilon > 0$, Weighted Max $k$-VC admits an $(1 - \varepsilon)$-approximate kernelization with $O(k/\varepsilon)$ vertices.
\end{lemma}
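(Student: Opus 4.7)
The plan is to define a kernel on the top-weighted-degree vertices and argue via a swap argument that some $k$-subset of them is a $(1-\varepsilon)$-approximation to $\mathrm{OPT}(G)$. Set $h := k + \lceil k/\varepsilon \rceil = O(k/\varepsilon)$. The reduction algorithm computes the set $H$ of $h$ vertices of $G$ with largest weighted degree (breaking ties arbitrarily) and outputs $(G',k)$, where $V(G') = H$, every edge of $G$ with both endpoints in $H$ is kept in $G'$ with its original weight, and each $v \in H$ additionally gets a self-loop of weight $\sum_{u \in V(G) \setminus H} w_G(vu)$. The solution-lifting algorithm simply returns, verbatim, whatever $k$-subset $S \subseteq H$ it receives.

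By construction, $f_{G'}(S) = f_G(S)$ for every $S \subseteq H$ with $|S| = k$: the edges inside $G[H]$ contribute identically to both, and the added self-loops at the vertices of $S$ account for exactly the weight of $G$-edges going from $S$ to $V(G) \setminus H$. Therefore the lifting is $f$-preserving, and it suffices to prove that some $k$-subset of $H$ has $f_G$-value at least $(1-\varepsilon)\,\mathrm{OPT}(G)$, because this will give $\mathrm{OPT}(G') \geq (1-\varepsilon)\,\mathrm{OPT}(G)$, and hence any $\beta$-approximate solution to $G'$ lifts to a $\beta(1-\varepsilon)$-approximation for $G$.

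To produce such a subset, start from an optimum $S^{(0)} := S^*$ of $G$ and, while $S^{(i)} \not\subseteq H$, pick any $u \in S^{(i)} \setminus H$ and any $w \in H \setminus S^{(i)}$ minimizing $w_G(w, S^{(i)}) := \sum_{v \in S^{(i)}} w_G(vw)$, then set $S^{(i+1)} := (S^{(i)} \setminus \{u\}) \cup \{w\}$. A short direct calculation, tracking exactly which edges enter and leave the covered set, gives
\[
f_G(S^{(i+1)}) - f_G(S^{(i)}) \;\geq\; (\deg_G(w) - \deg_G(u)) - w_G(w, S^{(i)}) \;\geq\; -w_G(w, S^{(i)}),
\]
where $\deg_G(w) \geq \deg_G(u)$ since $w \in H$ while $u \notin H$. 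Averaging over the $|H \setminus S^{(i)}| \geq h - k$ candidates for $w$ bounds the minimum by
\[
\min_{w \in H \setminus S^{(i)}} w_G(w, S^{(i)}) \;\leq\; \frac{w_G\bigl(E(S^{(i)}, H \setminus S^{(i)})\bigr)}{h - k} \;\leq\; \frac{f_G(S^{(i)})}{h - k} \;\leq\; \frac{\mathrm{OPT}(G)}{h - k},
\]
where the middle inequality uses that every edge between $S^{(i)}$ and $V(G)\setminus S^{(i)}$ is covered by $S^{(i)}$ and thus contributes to $f_G(S^{(i)})$. Summing the per-step losses over the at most $k$ swap steps yields $f_G(S^{(k)}) \geq \mathrm{OPT}(G) - k\cdot \mathrm{OPT}(G)/(h-k) = (1-\varepsilon)\,\mathrm{OPT}(G)$ for the chosen $h$, which finishes the proof.

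The only delicate point I expect is the bookkeeping around self-loops, in both the identity $f_{G'}(S) = f_G(S)$ and the per-step inequality: one must be careful that a self-loop at $w$ contributes to $\deg_G(w)$ but not to $w_G(w, S^{(i)})$ whenever $w \notin S^{(i)}$, and that pre-existing self-loops in $G[H]$ are kept separately from the newly added ones. Once that convention is fixed consistently, the remainder of the argument is routine.
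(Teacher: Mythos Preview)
Your kernel construction is identical to the paper's: restrict to the $h=k+\lceil k/\varepsilon\rceil$ vertices of largest weighted degree and compensate with self-loops. Where you differ is in proving the key approximation guarantee (the paper's Observation~\ref{obs:largest-deg}), namely that some $k$-subset of $H$ is a $(1-\varepsilon)$-approximation.

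The paper replaces \emph{all} of $S_{\opt}\setminus H$ at once by a \emph{uniformly random} subset $U^*\subseteq H\setminus S_{\opt}$ of the same size and computes $\Ex[E_G(S)]$; the double-counting of edges inside $U^*$ is controlled by the pairwise selection probability $\rho^2$ with $\rho\le\varepsilon$. You instead swap vertices one at a time, each time greedily choosing the $w\in H\setminus S^{(i)}$ minimizing $w_G(w,S^{(i)})$, and bound the per-step loss by averaging. Both arguments are correct and yield the same bound with the same $h$. Your route is more elementary (no probabilistic calculation, no need for the $\rho^2$ negative-correlation step) and arguably slightly cleaner; the paper's argument is more ``global'' and makes the $(1-\varepsilon)$ factor appear in one shot rather than as a sum of $k$ losses. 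One small convenience of your version is that the inequality $f_G(S^{(i)})\le\optmax(G,k)$, which you use to cap each step's loss, is immediate since $S^{(i)}$ always has size $k$.

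Your self-loop bookkeeping is fine: in the gain term $\deg_G(w)-w_G(w,S^{(i)})$ the self-loop at $w$ sits in $\deg_G(w)$ but not in $w_G(w,S^{(i)})$ (since $w\notin S^{(i)}$), which is exactly what is needed. The only omission, harmless, is the trivial case $n<h$, where $H=V_G$ and no swap is needed.
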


As stated earlier, the above result is not directly comparable to Lokshtanov \etal's~\cite{LPRS17} approximate kernel of $O(k^5/\varepsilon^2)$ vertices for the unweighted version of Max $k$-VC. Fortunately, our technique also gives an $O(k/\varepsilon^2)$-vertex approximate kernel for the unweighted case, which indeed improves upon Lokshtanov \etal's result. (See the end of Section~\ref{sec:kernel}.)

Interestingly, the above approximate kernelization also has an application outside of parameterized complexity: using our approximate kernelization as a preprocessing step, we can directly apply Raghavendra and Tan's SDP-based algorithm for 2SAT with cardinality constraint~\cite{RT12} to give an $0.92$-approximation algorithm for Max $k$-VC in polynomial time. This improves upon the aforementioned polynomial time approximation algorithm of Feige and Langberg~\cite{FL01} which yields $(0.75 + \delta)$-approximation for some (small and unspecified) constant $\delta > 0$.

\begin{corollary} \label{cor:approx}
There exists a polynomial time 0.92-approximation algorithm for Weighted Max $k$-VC.
\end{corollary}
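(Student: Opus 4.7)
The plan is to chain the approximate kernelization of Lemma~\ref{lem:main} with Raghavendra and Tan's SDP-based algorithm for 2SAT with cardinality constraint~\cite{RT12}. First, I would apply Lemma~\ref{lem:main} with a suitably small constant $\varepsilon' > 0$ (to be fixed at the end) to the input instance $(G, k)$. This produces a reduced instance $(G', k)$ with $n' = O(k/\varepsilon')$ vertices, together with a solution lifting algorithm that turns any $\beta$-approximate solution on $(G',k)$ into a $(1-\varepsilon')\beta$-approximate solution on $(G,k)$. The key structural property we gain is that after kernelization the ratio $k/n' = \Theta(\varepsilon')$ is a constant bounded away from $0$ and~$1$.

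Next, I would encode the kernelized Weighted Max $k$-VC instance as a Max 2SAT instance with a cardinality constraint in the natural way: introduce a Boolean variable $x_v$ for every $v \in V(G')$, represent each edge $uv$ of weight $w_{uv}$ by a weighted clause $x_u \vee x_v$, and represent each self-loop at $v$ (introduced by the kernelization to account for the weight of edges leaving the $O(k/\varepsilon')$ retained vertices) by the unit clause $x_v$, which is the 2-clause $x_v \vee x_v$. Finally, impose the global constraint $\sum_v x_v = k$. Under this encoding, the total weight of satisfied clauses is exactly the total weight of edges of $G'$ covered by $S = \{v : x_v = 1\}$, so the two optimization problems are identical.

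Now I would invoke the Raghavendra--Tan algorithm~\cite{RT12} on this 2SAT instance with bias $\mu = k/n' = \Theta(\varepsilon')$. Their result gives a polynomial-time $\alpha(\mu)$-approximation where $\alpha(\mu)$ is a fixed constant for every fixed $\mu \in (0,1)$, and in particular can be made at least some universal constant $\alpha^\star > 0.92$ as long as $\mu$ is a constant bounded away from $0$ and $1$. Composing with the solution lifting algorithm yields a $(1-\varepsilon')\alpha(\mu)$-approximation for the original Max $k$-VC instance. Choosing $\varepsilon'$ small enough (but still a fixed constant) so that $(1-\varepsilon')\alpha(\mu) \geqs 0.92$ completes the proof, and the overall running time remains polynomial in $n$ because $\varepsilon'$ is fixed and the kernelization, encoding, and SDP rounding each run in polynomial time.

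The main obstacle is verifying that Raghavendra--Tan's approximation guarantee is in fact at least $0.92/(1-\varepsilon')$ for the specific regime $\mu = \Theta(\varepsilon')$; this amounts to quoting their quantitative bound and checking that the constant loss from the kernelization can be absorbed. A minor subtlety is the handling of self-loops via unit clauses, but since a unit clause $x_v$ is literally a degenerate 2-clause $x_v \vee x_v$, it fits into their framework without modification.
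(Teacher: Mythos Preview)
Your proposal is correct and follows essentially the same route as the paper: apply the $(1-\varepsilon)$-approximate kernel of Lemma~\ref{lem:main} so that the reduced instance has $n' = O(k/\varepsilon)$ vertices, encode Weighted Max $k$-VC on $G'$ as Max 2SAT with the cardinality constraint $\sum_v x_v = k$, and run Raghavendra--Tan. One small clarification: the paper quotes the Raghavendra--Tan guarantee (Theorem~\ref{thm:2sat}) as a single ratio $\alpha > 0.92$ that is \emph{independent} of $k/n$, with the dependence on $k/n$ appearing only in the running time $n^{\poly(n/k)}$; thus the ``main obstacle'' you flag (whether the ratio stays above $0.92$ at $\mu = \Theta(\varepsilon')$) is a non-issue, while the reason the kernel is needed is precisely to make $n'/k = O(1/\varepsilon)$ so that $n'^{\,\poly(n'/k)} = k^{\poly(1/\varepsilon)}$ is polynomial.
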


We note here that the approximation guarantee above is even better than the previous best known ratios for some special cases, such as in bipartite graph~\cite{ApollonioS14,BonnetEPS18} where the previous best known approximation ratio is 0.821~\cite{BonnetEPS18}.

\subsection*{Minimum $k$-Vertex Cover}

For the Weighted Min $k$-VC problem, we give a FPT-AS with similar running time of $O(1/\varepsilon)^{O(k)} \cdot \poly(n)$ for the problem. Once again, this improves upon the $(k/\varepsilon)^{O(k)} \cdot \poly(n)$-time algorithm of Gupta \etal~\cite{GLL18a,GLL18}.

\begin{theorem} \label{thm:minkvc}
For every $\varepsilon > 0$, there exists an $(1 + \varepsilon)$-approximation algorithm for Weighted Min $k$-VC that runs in time $O(1/\varepsilon)^{O(k)} \cdot \poly(n)$.
\end{theorem}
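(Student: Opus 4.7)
\textbf{Proof plan for Theorem~\ref{thm:minkvc}.} The plan is to follow the same high-level strategy as Theorem~\ref{thm:fpt-as}: construct an approximate kernelization of size $O(k/\varepsilon^c)$ vertices for Weighted Min $k$-VC (for some constant $c \geq 1$) and then brute-force over all $k$-subsets of the kernel. Since $\binom{O(k/\varepsilon^c)}{k} = O(1/\varepsilon^c)^k = O(1/\varepsilon)^{O(k)}$, this yields the claimed running time of $O(1/\varepsilon)^{O(k)} \cdot \poly(n)$. The kernel construction follows the same pattern as Lemma~\ref{lem:main}: identify a candidate set $T \subseteq V$ and form the reduced instance $G'$ on $T$, with self-loops on each $v \in T$ of weight equal to the total weight of edges from $v$ to $V \setminus T$, so that the covered weight in $G'$ matches the covered weight in $G$ for every $S \subseteq T$.

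The natural first attempt---taking $T$ to be the $O(k/\varepsilon)$ vertices of smallest weighted degree---does not immediately work, unlike its high-degree counterpart for Max $k$-VC. The obstacle is structural: an optimum $S^*$ for Min $k$-VC may contain a high-degree vertex $v$ whose edges lie almost entirely inside $S^*$, i.e., $v$ sits inside a dense cluster used to ``hide'' edges. Replacing such a $v$ with a low-degree vertex $u \in T\setminus S^*$ can \emph{increase} the cover weight, since $v$'s previously internal edges become uncovered while $u$'s edges may extend outside the new solution. To circumvent this asymmetry, my plan is to enrich $T$ beyond just the smallest-degree vertices. One concrete strategy: first compute a greedy $2$-approximate solution $S_0$ for Min $k$-VC, then set $T := S_0 \,\cup\, \{\text{smallest-weighted-degree vertices}\} \,\cup\, \bigcup_{v \in S_0} N_v$, where each $N_v$ is a $\poly(1/\varepsilon)$-size collection of high-weight neighbors of $v$, giving $|T| = O(k/\varepsilon^c)$ for a small constant $c$.

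\textbf{Main obstacle.} The crux is the exchange argument showing that some $k$-subset $S \subseteq T$ achieves $w_G(\mathrm{cov}(S)) \leq (1+\varepsilon)\mathrm{opt}$. The plan is to iteratively swap vertices of $S^* \setminus T$ for vertices of $T \setminus S^*$, with total drift bounded by $\varepsilon \cdot \mathrm{opt}$. A per-swap drift bound is subtle, since each swap's cost depends on both the weighted-degree difference and the internal edges of $S^*$ at the swapped vertices; crucially, when a ``dense-cluster'' vertex of $S^*$ must be swapped out, the enriched candidate set $T$ must contain a suitable substitute (some high-weight neighbor in $N_{v'}$ for $v' \in S_0$). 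The accounting will likely use the two-sided bound $\mathrm{opt} \leq w_G(\mathrm{cov}(S_0)) \leq 2\mathrm{opt}$ to cap the weighted degrees appearing in $S^*$, charging the drift from each swap against the slack between these two quantities. Making this charging scheme quantitatively tight while keeping $|T| = O(k/\varepsilon^c)$ for a small $c$ is the principal technical challenge, and is the step where Min $k$-VC genuinely differs from its maximization counterpart.
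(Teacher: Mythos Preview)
Your plan has a genuine gap: you are attempting to build a $(1+\varepsilon)$-approximate kernel for Weighted Min $k$-VC with $O(k/\varepsilon^c)$ vertices, but the paper itself proves (Lemma~\ref{lem:no-kernel}) that, assuming the Strong Small Set Expansion Hypothesis and $\NP \nsubseteq \coNP/\poly$, no polynomial-size $(2-\varepsilon)$-approximate kernel exists for this problem. So the object you are trying to construct is conditionally ruled out, and the exchange argument you sketch cannot be completed in general. Concretely, consider a $d$-regular SSE instance: every vertex has identical weighted degree, so ``smallest-degree vertices'' carries no information, and the greedy $2$-approximation $S_0$ is an arbitrary $k$-set with expansion close to $1$. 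The optimum $S^*$ is a low-expansion set that need not intersect $S_0$, its neighbor sets $N_v$, or any degree-based candidate set; there is simply no local statistic that reveals $S^*$. Your own diagnosis of the obstacle (a dense cluster ``hiding'' edges) is exactly right, but enriching $T$ with neighbors of $S_0$ does not help, because $S_0$ can be far from $S^*$ in a regular expander-like graph.

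The paper's proof takes a completely different route that avoids kernelization. It first reduces (via color coding) to a multicolored variant, and then runs a branch-and-bound subgraph-generation procedure combined with dynamic programming over color subsets. The key technical ingredient (Lemma~\ref{lem:eps-net}) shows that when branching on the neighborhood of a vertex $u$, one only needs to try $O(1/\varepsilon)^j$ representative $j$-subsets rather than all of them, because discarding neighbors whose total edge weight to $u$ is at most $(\varepsilon/2)\cdot\wdeg(u)$ changes the objective by at most a $(1+\varepsilon)$ factor. This keeps the ``branching factor per vertex added'' at $O(1/\varepsilon)$, giving the claimed $O(1/\varepsilon)^{O(k)}\cdot\poly(n)$ time without ever producing a small kernel.
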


We remark that this algorithm is different from the algorithm for Max $k$-VC and is instead based on a careful branch-and-bound approach. A natural question here is perhaps whether this difference is inherent. While it is unclear how to make this question precise, we provide an evidence that the two problems are indeed of different natures by showing that, in contrast to Max $k$-VC, a polynomial size approximate kernelization for Min $k$-VC for any factor less than two is unlikely to exist:

\begin{lemma} \label{lem:no-kernel}
Assuming the Strong Small Set Expansion Hypothesis (Conjecture~\ref{conj:strong-sseh}) and $\NP \nsubseteq \coNP/\poly$, Weighted Min $k$-VC does not admit a polynomial size $(2 - \varepsilon)$-approximate kernelization for any $\varepsilon \in (0, 1]$.
\end{lemma}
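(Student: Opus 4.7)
The plan is to apply the lossy OR-cross-composition framework of Lokshtanov \etal~\cite{LPRS17}, which, under $\NP \nsubseteq \coNP/\poly$, rules out polynomial-size $\alpha$-approximate kernels for a parameterized optimization problem $\Pi$ whenever an NP-hard language admits an $\alpha$-lossy OR-cross-composition into $\Pi$. I target $\alpha = 2-\varepsilon$ with $\Pi = $ Weighted Min $k$-VC. The NP-hard source is to be supplied by the Strong SSEH (Conjecture~\ref{conj:strong-sseh}): the ``strong'' modifier is there precisely to upgrade the Gandhi--Kortsarz~\cite{GandhiK15} $(2-\varepsilon)$-inapproximability of Min $k$-VC from a conjecture-conditional statement to outright NP-hardness of the gap problem of distinguishing $\opt(G, k) \leqs s$ from $\opt(G, k) \geqs (2-\varepsilon) s$.

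By standard padding I may assume that the $t$ instances to be composed share the same $k$, the same threshold $s$, and the same size. The composition is then the edge-weighted disjoint union $H := H_1 \sqcup \cdots \sqcup H_t$ with parameter $k$. Completeness is immediate: in the YES-case for some $H_j$, its $k$-cover is a $k$-cover of $H$ of weight $\leqs s$. The subtle point is soundness in the all-NO case, where a $k$-subset $S \subseteq V(H)$ may split as $k_i := |S \cap V(H_i)|$ with $\sum_i k_i = k$, and the bare NO guarantee $\opt(H_i, k) \geqs (2-\varepsilon) s$ does not itself bound the weight covered by partial choices. I would therefore work with a strengthened NO-side guarantee: every $k'$-subset of a NO instance ($1 \leqs k' \leqs k$) covers weight at least $(k'/k)(2-\varepsilon)s$. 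Summing this over components yields $\sum_i w_{H_i}(S \cap V(H_i)) \geqs \sum_i (k_i/k)(2-\varepsilon)s = (2-\varepsilon)s$, preserving the gap.

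The main obstacle is thus establishing this linear-in-$k'$ soundness. Instances produced by SSE-style reductions are essentially regular and expander-like, so any small vertex set already covers edge weight roughly proportional to its size, and I expect the property to come for free from (the NP-hardness supplied by) Strong SSEH combined with the Gandhi--Kortsarz reduction. If not, it can be enforced by a calibrated ``balancer'' -- for instance, scaling the original edge weights and adding a uniform light weight incident to every vertex, or attaching dummy vertices to each $H_i$ -- tuned so that partial covers pay at least $(k'/k)(2-\varepsilon)s$ while the YES side remains within $(1+o(1))s$ after renormalization. Once this is in place, $H$ has parameter $k$ polynomial in a single source instance's size and total size polynomial in $t \cdot \max_i |H_i|$, giving an $\alpha$-lossy OR-cross-composition with $\alpha = 2-\varepsilon$; the framework of~\cite{LPRS17} then delivers the claimed impossibility of a polynomial-size $(2-\varepsilon)$-approximate kernel.
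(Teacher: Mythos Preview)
Your approach matches the paper's: disjoint-union $(2-\varepsilon)$-gap cross-composition from Strong-SSE into Min $k$-VC via the Lokshtanov \etal\ framework. However, you have misidentified the role of ``Strong'': it does not upgrade conjectural hardness to NP-hardness (the original SSEH already posits NP-hardness), but rather strengthens the soundness to require $\Phi(S) > 1 - \eta$ for all $S$ of size \emph{at most} $\delta|V_G|$ rather than exactly $\delta|V_G|$ --- combined with regularity (so each piece $S_i$ covers $d|S_i|(1+\Phi(S_i))/2 \geqs d|S_i|(2-\eta)/2$ edges), this is precisely the linear-in-$k'$ soundness you were worried about, and no balancer or calibration is needed.
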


The above result is under a variant of the Small Set Expansion Hypothesis~\cite{RS10}; please refer to Section~\ref{sec:no-kernel} for the precise definition of the variant. We also note that the above lower bound also applies to the unweighted version; again please see Section~\ref{sec:no-kernel} for more details.

\section{Notations}

Throughout this note, we think of an edge-weighted graph as a complete graph (self-loops included) where each edge is endowed with a non-negative weight. More specifically, an edge-weighted graph $G$ consists of a vertex set $V_G$ and a weight function $w_G: \binom{V_G}{\leqs 2} \to \mathbb{R}_{\geqs 0}$. (Note that, for a set $U$ and a non-negative integer $\ell$, we use $\binom{U}{\leqs \ell}$ and $\binom{U}{\ell}$ to denote the collections of subsets of $U$ of sizes at most $\ell$ and exactly $\ell$ respectively.) When the graph is clear from the context, we may drop the subscript $G$, and we sometimes use $w_e$ to denote $w(e)$ for brevity. For each vertex $v \in V$, we use $\wdeg(v)$ to denote its weighted degree, i.e., $\wdeg(v) = \sum_{e \in \binom{V_G}{\leqs 2}, v \in e} w_e$. For a subset $S \subseteq V_G$, we write $\wdeg(S)$ to denote $\sum_{v \in S} \wdeg(v)$. For subsets $S, T \subseteq V_G$, we use $E_G(S, T)$ to denote the total weight of edges with at least one endpoint in $S$ and at least one endpoint in $T$; more specifically, $E_G(S, T) = \sum_{e \in \binom{V_G}{\leqs 2}, e \cap S \ne \emptyset, e \cap T \ne \emptyset} w_G(e)$. Note that $E_G(S, S)$ is the total weight of the edges covered by $S$; for brevity, we use $E_G(S)$ as a shorthand for $E_G(S, S)$. Finally, we use $\optmin(G, k)$ and $\optmax(G, k)$ to denote the optimums of Min $k$-VC and Max $k$-VC respectively on the instance $(G, k)$. More formally, $\optmin(G, k) = \min_{S \in \binom{V_G}{k}} E_G(S)$ and $\optmax(G, k) = \max_{S \in \binom{V_G}{k}} E_G(S)$.

\section{Maximum $k$-Vertex Cover}

We will now prove our results for Max $k$-VC. To do so, it will be convenient to order the vertices of the input graph $V_G$ based on their weighted degree (ties broken arbitrarily), i.e., let $v_1, \dots, v_n$ be the ordering of vertices in $V_G$ such that $\wdeg_G(v_1) \geqs \cdots \geqs \cdot \wdeg_G(v_n)$. Moreover, we use $V_i$ to denote the set of $i$ vertices with highest weighted degree, i.e., $V_i = \{v_1, \dots, v_i\}$.

\subsection{A Simple Observation and A Faster FPT-AS}

Our main insight to the Weighted Max $k$-VC problem is that there is always an $(1 - \varepsilon)$-approximate solution which is entirely contained in $V_{O(k/\varepsilon)}$, as stated more formally below.

\begin{observation} \label{obs:largest-deg}
For any $\varepsilon > 0$, let $n' = \min\{k + \lceil k/\varepsilon \rceil, n\}$. Then, there exists $S^* \subseteq V_{n'}$ of size $k$ such that $E_G(S^*) \geqs (1 - \varepsilon) \cdot \optmax(G, k)$.
\end{observation}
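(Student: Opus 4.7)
My plan is a greedy swap argument. Let $\widehat S$ be any optimal $k$-subset and write $\widehat S = A \sqcup B$ with $A = \widehat S \cap V_{n'}$, $B = \widehat S \setminus V_{n'}$, and $b = |B|$. If $n' = n$ or $b = 0$, then $\widehat S \subseteq V_{n'}$ and we take $S^* := \widehat S$; otherwise $n' = k + \lceil k/\varepsilon\rceil$, and I may further assume $\varepsilon \leqs 1$ (else $(1-\varepsilon)\optmax(G,k) \leqs 0 \leqs E_G(S^*)$ trivially). These give $|V_{n'}\setminus A| \geqs n' - k = \lceil k/\varepsilon\rceil \geqs k \geqs b$, so there is room to substitute $b$ vertices.

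I would build $S^*$ greedily: start from $A_0 := A$, and for $i = 1,\ldots,b$ pick $u_i \in V_{n'}\setminus A_{i-1}$ maximizing
\[
g_i \;:=\; E_G(A_{i-1}\cup\{u_i\}) - E_G(A_{i-1}) \;=\; \wdeg(u_i) \;-\; \textstyle\sum_{v \in A_{i-1}} w_{\{u_i,v\}},
\]
then set $A_i := A_{i-1} \cup \{u_i\}$ and $S^* := A_b$. Since the maximum is at least the average over $u \in V_{n'}\setminus A_{i-1}$, $g_i$ is at least (average of $\wdeg(u)$) minus (average of $\sum_v w_{\{u,v\}}$). For the first, each such $u$ satisfies $\wdeg(u) \geqs \wdeg(v_{n'}) \geqs \max_{v\in B}\wdeg(v) \geqs \wdeg(B)/b \geqs (E_G(\widehat S) - E_G(A))/b$, where the last step uses $\wdeg(B) \geqs E_G(B) \geqs E_G(\widehat S) - E_G(A)$. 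For the second, $\sum_{u \in V_{n'}\setminus A_{i-1}} \sum_{v \in A_{i-1}} w_{\{u,v\}} \leqs E_G(A_{i-1}) \leqs E_G(S^*)$ and $|V_{n'}\setminus A_{i-1}| \geqs \lceil k/\varepsilon\rceil$, so this average is at most $\varepsilon E_G(S^*)/k$.

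Summing the resulting bound on $g_i$ for $i = 1,\ldots,b$ and telescoping $\sum_i g_i = E_G(S^*) - E_G(A)$ yields
\[
E_G(S^*) - E_G(A) \;\geqs\; \bigl(E_G(\widehat S) - E_G(A)\bigr) - (b/k)\varepsilon E_G(S^*) \;\geqs\; E_G(\widehat S) - E_G(A) - \varepsilon E_G(S^*),
\]
which rearranges to $(1+\varepsilon)E_G(S^*) \geqs E_G(\widehat S) = \optmax(G,k)$ and hence $E_G(S^*) \geqs (1-\varepsilon)\optmax(G,k)$. The main obstacle is losing a factor of two: the tempting shortcut of picking $B' \in \binom{V_{n'}\setminus A}{b}$ uniformly at random and setting $S^* = A \cup B'$ only yields roughly $\tfrac12\optmax(G,k)$, because each edge internal to $V_{n'}\setminus A$ is double-counted in $\wdeg(V_{n'}\setminus A)$. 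The greedy rule sidesteps this by charging only the cross-edges from $u_i$ to $A_{i-1}$, and the slack $|V_{n'}\setminus A_{i-1}| \geqs k/\varepsilon$ is precisely what keeps that charge below $\varepsilon E_G(S^*)/k$ per step.
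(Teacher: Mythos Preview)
Your greedy swap argument is correct and gives a valid proof of the observation. It differs from the paper's approach, which is probabilistic: the paper picks the replacement set $U^*$ uniformly at random from $V_{n'}\setminus A$ (of size $b$) and shows $\mathbb{E}[E_G(A\cup U^*)]\geqs (1-\varepsilon)\,\optmax(G,k)$, handling the internal-edge double counting via the bound $\Pr[u\in U^*\wedge v\in U^*]\leqs \rho^2$ for $\rho = b/|V_{n'}\setminus A|\leqs\varepsilon$. So your closing remark that the random shortcut ``only yields roughly $\tfrac12\optmax(G,k)$'' is not accurate---with this second-moment bookkeeping the random argument also reaches $(1-\varepsilon)$. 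What your greedy rule buys is that you never have to track edges internal to the candidate pool at all: by charging only the cross-edges from $u_i$ into the already-built $A_{i-1}$ and using $|V_{n'}\setminus A_{i-1}|\geqs k/\varepsilon$, you sidestep the covariance calculation entirely, and the construction is explicitly deterministic. Both proofs rest on the same two facts (every $u\in V_{n'}$ has $\wdeg(u)\geqs \wdeg(B)/b$, and $|V_{n'}\setminus A|\geqs k/\varepsilon$), so they are close in spirit; the genuine difference is how the internal edges of the replacement vertices are controlled.
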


Note that this implies Theorem~\ref{thm:fpt-as}: we can enumerate all $k$-vertex subsets of $V_{n'}$ and find an $(1 - \varepsilon)$-approximation for Max $k$-VC in $\binom{|V_{n'}|}{k} \poly(n) = O(n'/k)^k\poly(n) = O(1/\varepsilon)^k\poly(n)$ time.

Before we present a formal proof of the observation, let us briefly give an (informal) intuition behind the proof. Let $S_{\opt}$ be the optimal solution for $(G, k)$. Our goal is to construct another set $S^* \subseteq V_{n'}$ such that $E_G(S^*)$ is roughly the same as $E_G(S_{\opt})$. To do so, we will just replace each vertex in $S_{\opt} \setminus V_{n'}$ by a vertex in $V_{n'} \setminus S_{\opt}$. Intuitively, this should be good for the solution, as we are replacing one vertex with another vertex that has higher weighted degree. However, this argument does not yet work: we might ``double count'' edges with both endpoints coming from the new vertices. The key point here is that, while we will not be able to avoid this double counting completely, we will be able to pick new vertices such that the total weight of such doubled counted edges is small. This is just because the set $V_{n'}$ is so large that even if we pick a random $k$ vertices from it, the probability that a given added edge is double counted is only $O(\varepsilon)$.


\begin{proof}[Proof of Observation~\ref{obs:largest-deg}]
Note that, if $n' = n$, the statement is obviously true. Hence, we may assume that $n' = k + \lceil k/\varepsilon\rceil$. Let $S_{\opt} \subseteq V_G$ denote any optimal solution, i.e., any subset of $V_G$ of size $k$ with $E_G(S_{\opt}) = \optmax(G, k)$. Let $S_{\opt}^{\text{in}} = S_{\opt} \cap V_{n'}$, $S_{\opt}^{\text{out}} = S_{\opt} \setminus V_{n'}$ and $U = V_{n'} \setminus S_{\opt}$. 

We construct $S \subseteq V_{n'}$ in randomly as follows. We randomly select a subset $U^* \subseteq U$ of $|S^{\text{out}}_{\opt}|$ vertices uniformly at random, and let $S = S_{\opt}^{\text{in}} \cup U^*$. Clearly, $S$ is a subset of $V_{n'}$ of size $k$. We will show that the expected value of $E_G(S)$ is at least $(1 - \varepsilon) \cdot \optmax(G, k)$. This would imply that there exists $S^* \subseteq V_{n'}$ of size $k$ such that $E_G(S^*) \geqs (1 - \varepsilon) \cdot \optmax(G, k)$ as desired.

To bound $\Ex[E_G(S)]$, let us first rearrange $E_G(S)$ as follows.
\begin{align} \label{eq:identity}
E_G(S) &= E_G(S_{\opt}^{\text{in}}) + E_G(U^*) - E_G(U^*, S_{\opt}^{\text{in}}).
\end{align}
Let $\rho = |S_{\opt}^{\text{out}}|/|U|$; note here that $\rho \leqs k/(n' - k) \leqs \varepsilon$. We can now bound $\Ex[E_G(U^*, S_{\opt}^{\text{in}})]$ by
\begin{align}
\Ex[E_G(U^*, S_{\opt}^{\text{in}})] = \sum_{u \in U} \sum_{v \in S_{\opt}^{\text{in}}} w_{\{u, v\}} \cdot \Pr[u \in U^*] = \rho \cdot \sum_{u \in U} \sum_{v \in S_{\opt}^{\text{in}}} w_{\{u, v\}} \leqs \varepsilon \cdot E_G(S_{\opt}^{\text{in}}) \label{eq:tmp01}
\end{align}
Moreover, $\Ex[E_G(U^*)]$ can be rearranged as
\begin{align}
\Ex[E_G(U^*)] 
&= \Ex\left[\sum_{u \in U^*} \left(\wdeg(u) - \frac{1}{2} \sum_{v \in U^* \setminus \{u\}} w_{\{u, v\}}\right)\right] \nonumber \\
&= \Ex\left[\sum_{u \in U} \left(\wdeg(u) \cdot \ind[u \in U^*] - \frac{1}{2} \sum_{v \in U \setminus \{u\}} w_{\{u, v\}} \cdot \ind[u \in U^* \wedge v \in U^*] \right)\right] \nonumber\\
&= \sum_{u \in U} \left(\wdeg(u) \cdot \Pr[u \in U^*] - \frac{1}{2} \sum_{v \in U \setminus \{u\}} w_{\{u, v\}} \cdot \Pr[u \in U^* \wedge v \in U^*] \right) \nonumber\\
&\geqs \sum_{u \in U} \left(\wdeg(u) \cdot \rho - \frac{1}{2} \sum_{v \in U \setminus \{u\}} w_{\{u, v\}} \cdot \rho^2 \right) \nonumber\\
&\geqs \rho(1 - \rho/2) \cdot \left(\sum_{u \in U} \wdeg(u)\right) \nonumber \\
&\geqs \rho(1 - \varepsilon) \cdot \left(\sum_{u \in U} \wdeg(u)\right) \label{eq:tmp1}
\end{align}
where in the first inequality we use the fact that $\Pr[u \in U^* \wedge v \in U^*] \leqs \Pr[u \in U^*]\Pr[v \in U^*] = \rho^2$.

Recall that the vertices are sorted in decreasing order of degrees; thus, for all $u \in U$, we have $\wdeg(u) \geqs \left(\sum_{v \in S_{\opt}^{\text{out}}} \wdeg(v)\right)/|S_{\opt}^{\text{out}}| \geqs E_G(S_{\opt}^{\text{out}})/|S_{\opt}^{\text{out}}|$. From this and~\eqref{eq:tmp1}, we arrive at
\begin{align}
\Ex[E_G(U^*)] &\geqs \rho(1 - \varepsilon) \cdot |U| \cdot \left(E_G(S_{\opt}^{\text{out}})/|S_{\opt}^{\text{out}}|\right) = (1 - \varepsilon) \cdot E_G(S_{\opt}^{\text{out}}) \label{eq:tmp2}
\end{align}

Plugging~\eqref{eq:tmp01} and~\eqref{eq:tmp2} back into~\eqref{eq:identity}, we indeed have
\begin{align*}
\Ex[E_G(S)] \geqs (1 - \varepsilon)(E_G(S_{\opt}^{\text{in}}) + E_G(S_{\opt}^{\text{out}})) \geqs (1 - \varepsilon) \cdot E_G(S_{\opt}) = (1 - \varepsilon) \cdot \optmax(G, k),
\end{align*}
which concludes the proof.
\end{proof}

\subsection{An Approximate Kernel} \label{sec:kernel}

Observation~\ref{obs:largest-deg} also naturally gives an $(1 - \varepsilon)$-approximate kernel for Weighted Max $k$-VC where the new instance has $O(k/\varepsilon)$ vertices, as stated below.

\begin{proof}[Proof of Lemma~\ref{lem:main}]
The reduction algorithm $\cA$ works by taking the graph induced on $V_{n'}$ (where $n' = \min\{k + \lceil k/\varepsilon \rceil, n\}$ as in Observation~\ref{obs:largest-deg}) and add appropriate weights to self-loops to compensate for edges going out of $V_{n'}$. More precisely, $\cA$ outputs $(G', k)$ where $V_{G'} = V_{n'}$ and $w_G'(\{u, v\}) = w_G'(\{u, v\})$ for all $u \ne v \in V_{G'}$ and $w_{G'}(u) = w_G(u) + E_G(\{u\}, V_G \setminus V_{n'})$ for all $u \in V_{G'}$.

The solution lifting algorithm $\cB$ simply outputs the same solution as its get. It is obvious to see that $E_{G'}(S) = E_G(S)$. Hence, if $E_{G'}(S) \geqs \alpha \cdot \optmax(G', k)$, then Observation~\ref{obs:largest-deg} implies that $E_G(S) = E_{G'}(S) \geqs \alpha(1 - \varepsilon) \cdot \optmax(G, k)$. This means that $(\cA, \cB)$ is an $(1 - \varepsilon)$-approximate kernel; moreover, it is obvious that the graph output by $\cA$ has size $O(k/\varepsilon)$ as desired.
\end{proof}

As mentioned earlier, the above kernel does not directly work for the unweighted case. Let us sketch below how we can modify the above proof to work in this case, albeit with a slightly worse $O(k/\varepsilon^2)$ vertices in the reduced instance. We omit the full proof, which is a simple undergraduate-level exercise, and only describe the main ideas. We do this in two steps; we first modify the proof for weighted graphs without self-loops and then we proceed to unweighted graphs.
\begin{itemize}
\item Suppose that the graphs $G$ and $G'$ must not contain any self-loops. Then, instead of adding self-loops as above, $\cA$ will add $n_{\text{padded}} = \lceil kn'/\varepsilon \rceil = O(k/\varepsilon^2)$ padded vertices and let the weight between each padded vertex and $u \in V_{n'}$ be $\frac{E_G(\{u\}, V_G \setminus V_{n'})}{n_{\text{padded}}}$. Once again, if we take a look at any set $S \subseteq V_{n'}$, we immediately have $E_G(S) = E_{G'}(S)$. The only additional argument needed is that these padded vertices has little effect on any solution. Indeed, it is simple to see that the weighted degree of each padded vertex is at most $(\varepsilon/k) \cdot \optmax(G, k)$. Thus, throwing these vertices away from any subset of size $k$ affect the total weights of edges covered by at most $\varepsilon \cdot \optmax(G, k)$, which implies that this is an $(1 - 2\varepsilon)$-approximate kernel. Adjusting $\varepsilon$ appropriately gives the $(1 - \varepsilon)$-approximate kernel with $O(k/\varepsilon^2)$ vertices.
\item The above idea naturally adapts to the unweighted case. Instead of adding an edge from every $u \in V_{n'}$ to all the padded vertices, we just add $E_G(\{u\}, V_G \setminus V_{n'})$ edges from each $u \in V_{n'}$ to different padded vertices. These edges are added in a way that each padded vertices has roughly the same degree. It is simple to check that, if the degree of all vertices $u \in V_{n'}$ is at most say $k/\varepsilon^2$, then this works immediately (with the same proof as above). The only issue is when there are vertices with degree larger than $k/\varepsilon^2$. (In this case, the number of edges required to be added may even be larger than $n_{\text{padded}}$!) Nevertheless, this issue can also be easily resolved, by observing that, if any vertex in $V_{k}$ has degree at least $k/\varepsilon$, then we can always take it in our solution while guaranteeing that the solution still remains within $\varepsilon \cdot \optmax(G, k)$ of the optimum. Hence, the reduction algorithm can first greedily pick these vertices and then use the padded argument as above; since no large degree vertex remains, the proof of the second step now works and we have the desired approximate kernel.
\end{itemize}

\subsection{Raghavendra-Tan Algorithm and An Improved Approximation}

We next describe how our approximate kernel can be used a preprocessing step for the aforementioned algorithm of Raghavendra and Tan~\cite{RT12} for Max 2SAT with cardinality constraint to obtain improved approximation for Weighted Max $k$-VC. 

Recall that the (weighted) Max 2SAT with cardinality constraint is the following problem. Given a collection $\cC$ of conjunctions of at most two literals (of variables $\{x_1, \dots, x_n\}$) and their associated weights, find an assignment to $\{x_1, \dots, x_n\}$ satisfying $x_1 + \cdots + x_n = k$ that maximizes the total weights of satisfied clauses in $\cC$. Raghavendra and Tan~\cite{RT12} device an algorithm with approximation ratio strictly greater than 0.92 for the problem, as stated below.

\begin{theorem}[\cite{RT12}] \label{thm:2sat}
For some $\alpha > 0.92$, there exists an $\alpha$-approximation algorithm for Max 2SAT with cardinality constraint that runs in time\footnote{The running time of the algorithm is not stated in this form in~\cite{RT12} as they are only concerned about the case where $k = \Omega(n)$, for which the running time is polynomial. To see that the running time is of the form $n^{\poly(n/k)}$, we note that their algorithm needs the variance guaranteed in their Theorem 5.1 to be at most $\poly(k/n)$. This means that they need the SDP solution to be $\poly(k/n)$-independence; to find such a solution, the running time required is $n^{\poly(n/k)}$ (see Theorem 4.1 in that paper).} $n^{\poly(n/k)}$.
\end{theorem}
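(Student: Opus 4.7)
The plan is to prove Theorem~\ref{thm:2sat} by combining a high level of the Lasserre (sum-of-squares) hierarchy for Max 2SAT with global cardinality constraint together with a conditioning-based rounding that reduces the problem to nearly-independent variables, and then apply a Goemans--Williamson / Lewin--Livnat--Zwick style hyperplane rounding to the conditioned SDP solution. Specifically, I would first write the natural Lasserre relaxation of level $r$ for the instance: variables $x_1,\dots,x_n\in\{0,1\}$, clauses of size at most two, objective $\sum_{C} w_C\cdot\mathbf{1}[C\text{ satisfied}]$, and the linear constraint $\sum_i x_i=k$. Solving this SDP takes $n^{O(r)}$ time, which is exactly what feeds into the $n^{\poly(n/k)}$ running time once we choose $r=\poly(n/k)$.

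Next I would invoke a global correlation / conditioning lemma (in the spirit of Barak--Raghavendra--Steurer and Guruswami--Sinop) to argue that at level $r=\poly(n/k)$, there exists a set $T$ of at most $\poly(n/k)$ variables such that, after conditioning the pseudo-distribution on a uniformly random assignment to $T$, the resulting local distributions on the remaining variables have pairwise correlations at most $\varepsilon=\poly(k/n)$. This is the quantitative ``$\varepsilon$-independence'' promised by Theorem 5.1 in~\cite{RT12} and is the content of the $n^{\poly(n/k)}$-time guarantee they cite from their Theorem 4.1. Because $T$ has bounded size, I can enumerate over all $2^{|T|}\leqs n^{\poly(n/k)}$ partial assignments, paying no asymptotic extra factor in the running time.

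Given the conditioned, nearly-independent SDP solution, I would apply hyperplane rounding tuned for Max 2SAT (the LLZ rounding, achieving ratio $\alpha_{\text{LLZ}}>0.94$ on Max 2SAT without the cardinality constraint). The standard per-clause analysis shows that the expected weight of satisfied clauses is at least $\alpha_{\text{LLZ}}$ times the SDP objective. For the cardinality constraint: the expectation of $\sum_i x_i$ after rounding is $k$ by construction, and because pairwise correlations are $O(\varepsilon)$, the variance of $\sum_i x_i$ is at most $O(k^2\varepsilon)=\poly(k/n)\cdot k^2$, which is $o(k^2)$. A cheap fix-up step (flip at most $O(k\sqrt{\varepsilon})$ variables to restore $\sum_i x_i=k$) then loses only an $o(1)$ fraction of the SDP value, giving a final ratio $\alpha\geqs\alpha_{\text{LLZ}}-o(1)>0.92$ once $n/k$ and hence $\varepsilon^{-1}$ is taken large enough, and finally a standard scaling/tuning argument covers the regime where $k=\Theta(n)$ directly.

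The main technical obstacle is the quantitative accounting in the rounding: one must simultaneously control (i) the loss from the cardinality fix-up, which scales with $\sqrt{\varepsilon}$, (ii) the additive error introduced by conditioning on $T$ (ensuring the conditional objective is still close to the unconditional SDP value in expectation over the random partial assignment), and (iii) the gap between the LLZ per-clause guarantee and $0.92$, which gives the slack that absorbs (i) and (ii). Making these three numerical budgets compatible -- i.e., choosing $r=\poly(n/k)$ so that $\varepsilon$ is small enough that $\alpha_{\text{LLZ}}-O(\sqrt\varepsilon)>0.92$ -- is exactly the careful parameter tradeoff carried out in~\cite{RT12}, and reproducing it constitutes the bulk of the formal proof.
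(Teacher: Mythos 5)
There is nothing to compare here in the usual sense: the paper does not prove Theorem~\ref{thm:2sat} at all. It is imported as a black box from~\cite{RT12}, and the only original content is the footnote justifying why the running time should be read as $n^{\poly(n/k)}$ (level-$\poly(n/k)$ relaxation to get a $\poly(k/n)$-independent solution via their Theorems~4.1 and~5.1). Your outline of the Lasserre-plus-conditioning-plus-rounding pipeline is a fair reconstruction of what~\cite{RT12} actually does, and your second paragraph matches the footnote's running-time accounting. But your write-up is, by its own admission, also a citation rather than a proof: you explicitly defer the entire quantitative tradeoff --- the part that determines whether the final ratio clears $0.92$ --- back to~\cite{RT12}. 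So at best you have re-derived the paper's footnote with more scaffolding.

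Two substantive inaccuracies are worth flagging. First, you attribute an LLZ-style rounding with ratio $\alpha_{\text{LLZ}} > 0.94$ to~\cite{RT12} and propose to spend the gap between $0.94$ and $0.92$ on the conditioning and cardinality-repair losses. That is not how~\cite{RT12} works: their own analysis yields a ratio only slightly above $0.92$, and the $0.94$ figure for the balanced case is due to the later work of Austrin, Benabbas and Georgiou~\cite{ABG13} (as this paper's concluding remarks discuss, that improvement does not extend to general $k$). So the ``slack'' you plan to spend does not exist in the source you are citing; the loss budget must come out of RT's own per-clause analysis. Second, your variance bookkeeping is off: with $n$ variables of pairwise covariance at most $\varepsilon$, the variance of $\sum_i x_i$ is bounded by $n/4 + n^2\varepsilon$, not $O(k^2\varepsilon)$; this is why~\cite{RT12} phrase their guarantee in terms of the variance of the \emph{normalized} sum being $\poly(k/n)$, and why the diagonal term forces $k$ to not be too small relative to $n$ (harmless in this paper, since the theorem is only invoked after kernelization to $O(k/\varepsilon)$ vertices, where $k = \Omega(n)$).
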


It is not hard to see that the Weighted Max $k$-VC can be formulated as Max 2SAT with cardinality constraint: we create a variable $x_i$ for each vertex $v_i$, and, for each $\{v_i, v_j\} \in \binom{V_G}{\leqs 2}$, we create a clause $(v_i \vee v_j)$ with weight $w_{\{v_i, v_j\}}$. Obviously, any solution to Max 2SAT satisfying $x_1 + \cdots + x_n = k$ is also a solution of Max $k$-VC with the same cost. Of course, the only issue in applying Raghavendra and Tan's algorithm here is that its running time $n^{\poly(n/k)}$ is not polynomial when $k = o(n)$. Fortunately, our approximate kernel above precisely circumvents this issue, as the reduction algorithm produces an instance $(G', k)$ where $|V_{G'}| \leqs O(k/\varepsilon)$. Thus, we can now apply the algorithm and arrives at 0.92 approximation for Weight Max $k$-VC in polynomial time.

\begin{proof}[Proof of Corollary~\ref{cor:approx}]
Let $\alpha$ be the approximation ratio from Theorem~\ref{thm:2sat} and let $\varepsilon > 0$ be a sufficiently small constant such that $\alpha(1 - \varepsilon) \geqs 0.92$. Let $\cA$ be the reduction algorithm for the $(1 - \varepsilon)$-approximate kernel as defined in the proof of Lemma~\ref{lem:main}.

For any instance $(G, k)$ of Weight Max $k$-VC, we apply $\cA$ to arrive at a reduced instance $(G', k)$ where $|V_{G'}| \leqs O(k/\varepsilon)$. We then formulate the instance $(G', k)$ as an instance of Max 2SAT with cardinality constraint and apply the Raghavendra-Tan algorithm, which gives an $\alpha$-approximate solution, i.e., a set $S \subseteq V_{G'}$ of size $k$ such that $E_{G'}(S) \geqs \alpha \cdot \optmax(G', k) \geqs \alpha (1 - \varepsilon) \cdot \optmax(G, k) \geqs 0.92 \cdot \optmax(G, k)$. Note that the Raghavendra-Tan algorithm runs in $k^{\poly(|V_{G'}|/k)} = k^{\poly(1/\varepsilon)}$ time. Hence, we have found a 0.92-approximate solution for $(G, k)$ in polynomial time.
\end{proof}


\section{Minimum $k$-Vertex Cover}

\subsection{A Faster FPT-AS}

We now present our result on Weighted Min $k$-VC, starting with the faster FPT-AS (Theorem~\ref{thm:minkvc}). It will be more convenient for us to work with a multicolored version of the problem, which we call \emph{Multicolored Min $k$-VC}. In Multicolored Min $k$-VC, we are given $G, k$ as before and also a coloring $\chi: V_G \to [k]$. A set $S \subseteq V_G$ is said to be \emph{colorful} if every vertex in $S$ is assigned a different color, i.e., $|\chi(S)| = |S|$. The goal of Multicolored Min $k$-VC is to find a colorful $S \subseteq V_G$ of size $k$ that maximizes $E_G(S)$. We overload the notation $\optmin$ and also use it to denote the optimum of Multicolored Min $k$-VC; that is, we let $\optmin(G, k, \chi) = \min_{S \in \binom{V_G}{k}, |\chi(S)| = k} E_G(S)$.

The main theorem of this section is the following FPT-AS for Multicolored Min $k$-VC. 

\begin{theorem} \label{thm:minkvc-colorful}
For any $\varepsilon > 0$, there exists an $(1 + \varepsilon)$-approximation algorithm for Multicolored Min $k$-VC that runs in time $O(1/\varepsilon)^{O(k)} \cdot \poly(n)$.
\end{theorem}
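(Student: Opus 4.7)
The plan is a branch-and-bound algorithm that constructs the colorful $k$-set one vertex at a time, branching over $O(1/\varepsilon)$ candidate vertices per color.

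\textbf{Setup.} First, compute a 2-approximation $S_0$ greedily (for each color class, pick a vertex of minimum weighted degree), giving $W := E_G(S_0)$ with $\optmin(G,k,\chi) \leqs W \leqs 2\optmin(G,k,\chi)$. Two structural observations drive the analysis. First, every edge incident to $v^*_c \in S^*$ is covered by $S^*$, so $\wdeg_G(v^*_c) \leqs E_G(S^*) = \optmin$; therefore we can prune any vertex of weighted degree exceeding $W$. Second, for \emph{any} ordering of the colors, $\sum_c \Delta(v^*_c \mid P^*_{<c}) = \optmin$, where $\Delta(v \mid P) := \wdeg_G(v) - E_G(\{v\}, P)$ is the marginal cost of adding $v$ to $P$; thus the average optimal marginal is $\optmin/k$.

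\textbf{Algorithm.} Maintain a partial colorful solution $P$ and a set $R$ of remaining colors. At each recursive call, adaptively pick $c^* \in R$ minimizing $\min_{v \text{ of color } c} \Delta(v \mid P)$ over the surviving vertices, and branch over the $T = \lceil C/\varepsilon \rceil$ candidate vertices of color $c^*$ with smallest marginal, recursing on $(P \cup \{v\}, R \setminus \{c^*\})$. Prune any branch whose running cost exceeds $(1+\varepsilon) W$. The recursion tree has depth $k$ and branching factor $T$, so total running time is $T^k \cdot \poly(n) = O(1/\varepsilon)^{O(k)} \cdot \poly(n)$.

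\textbf{Correctness.} The main claim is that at least one explored branch produces a solution of cost $\leqs (1+\varepsilon)\optmin$. I would argue by an exchange / hybrid induction: consider hybrid solutions $H_0 = S^*, H_1, \ldots, H_k = S_{\text{alg}}$, where $H_j$ replaces the optimal vertex for the $j$-th processed color by the algorithm's choice. At each step, the adaptive choice of $c^*$ together with the identity $\sum_{c \in R} \Delta(v^*_c \mid P^*_{<c}) \leqs \optmin$ guarantees that there exists an ordering of the remaining optimal vertices such that $v^*_{c^*}$'s marginal (over an aligned partial set) is at most $\optmin/|R|$, and the top-$T$ branching list then contains a vertex $v$ whose ``swap cost'' into $H_{j-1}$ adds at most $\varepsilon\optmin/k$. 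Summing over the $k$ levels yields $E_G(S_{\text{alg}}) \leqs (1+\varepsilon)\optmin$.

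\textbf{Main obstacle.} The delicate point is that the algorithm's marginals are measured against its current partial solution $P_j$, while the hybrid comparison naturally uses the optimal prefix $P^*_{<c^*}$; these differ. Bounding $|\Delta(v \mid P_j) - \Delta(v \mid P^*_{<c^*})|$ requires using the degree pruning ($\wdeg \leqs W$) together with a careful charging of the ``alignment error'' accumulated so far. The adaptive ordering by smallest current marginal is essential: colors on which the optimum spends small marginal are handled early, before the algorithm's partial set can diverge much from the optimum's, which keeps the per-step loss at the $\varepsilon\optmin/k$ level needed to make the telescoping sum close with the right constant $C$ in $T = C/\varepsilon$.
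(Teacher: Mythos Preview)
Your proposal has a genuine gap: the marginal-based branching can miss the optimal vertex for a color entirely, and no vertex in your top-$O(1/\varepsilon)$ list need be an adequate substitute. Here is a concrete instance on which your algorithm returns a $1.5$-approximation no matter how large the constant $C$ in $T=\lceil C/\varepsilon\rceil$ is.

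Take $k=3$. Color $3$ has one isolated vertex $c^*$. Color $1$ has $a$ (weighted degree $1$, its sole edge going to $b^*$) and $a'$ (weighted degree $1-\delta$, to an external vertex). Color $2$ has $b^*$ (weighted degree $1$, its sole edge going to $a$) together with $N\gg T$ vertices $b_1,\dots,b_N$, each of weighted degree $0.5$ to distinct external vertices. The optimum is $\{a,b^*,c^*\}$ with $E_G=1$; the greedy baseline gives $W=1.5-\delta$, so no degree pruning occurs. Your adaptive rule processes color $3$ first (min marginal $0$), then color $2$ (min marginal $0.5$, attained by the $b_i$). At that moment $\Delta(b^*\mid\{c^*\})=1$ while $\Delta(b_i\mid\{c^*\})=0.5$, so the top-$T$ list for color $2$ consists solely of $b_i$'s and $b^*$ is never explored. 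Every leaf of the recursion has cost at least $1.5-\delta$.

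This is precisely where your hybrid argument breaks: at the color-$2$ step the swap cost is $E_G(H_2)-E_G(H_1)=0.5=\tfrac12\optmin$, not $\varepsilon\,\optmin/k$. The adaptive rule chooses the color whose \emph{cheapest vertex overall} has small current marginal, but that says nothing about the marginal of the \emph{optimal} vertex of that color, which here stays large until its partner $a$ (from a different color) is already in $P$. The ``alignment error'' you propose to charge away is not a lower-order term; it is $\Theta(\optmin)$.

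The paper sidesteps this by abandoning per-color processing altogether. It grows connected pieces of the unknown optimum along edges: from a current vertex $u$ it branches over a representative family $\cT$ of subsets of $u$'s neighbors (Lemma~\ref{lem:eps-net}) with the guarantee that for \emph{every} target set $S$ there is some $T\in\cT$ with $T\subseteq S$ and $\sum_{v\in S\setminus T} w_{\{u,v\}}\leqs (\varepsilon/2)\wdeg(u)$. In the instance above, starting from $a$ the algorithm immediately generates $\{a,b^*\}$ (since $b^*$ is $a$'s only neighbor), and the final DP combines this with the singleton $\{c^*\}$ to recover cost $1$. The essential difference is that the paper's branching follows the edge structure of the optimum, so a vertex like $b^*$ is reached via its heavy edge to $a$ rather than via a marginal comparison against unrelated low-degree vertices of the same color.
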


We note here that the above lemma immediately gives an FPT-AS for (uncolored) Weight Min $k$-VC with similar running time (i.e. Theorem~\ref{thm:minkvc}) via standard color-coding technique~\cite{AYZ95}. Specifically, they show how to construct a family $\cF$ of $k$-perfect hash functions from $V_G \to \{1, \dots, k\}$ in $2^{O(k)} \cdot \poly(n)$ time. By running the FPT-AS from Theorem~\ref{thm:minkvc-colorful} on $(G, k, \chi)$ for all $\chi \in \cF$ and take the best solution among the outputs, we arrive at the FPT-AS for (uncolored) Weight Min $k$-VC.

We now proceed to discuss the intuition behind Theorem~\ref{thm:minkvc-colorful}. The algorithm consists of two parts: subgraph generation and dynamic programming. Roughly speaking, the subgraph generation part will, for each set of colors $C \subseteq [k]$, generate connected colorful subsets $T \subseteq V_G$ whose color is $C$ and record the minimum $E_G(T)$ in the table cell \text{DP}$[C]$. The second part of the algorithm then uses a standard dynamic programming to find a colorful $k$-vertex $S$ with minimum $E_G(S)$.

For the purpose of exposition, let us assume for the moment that our graph is unweighted. The subgraph generation part is the heart of the algorithm, and, if not implemented in a careful manner, will be too slow. For instance, the trivial implementation of this is as a recursive function that maintains a set of included vertices $\Sinc$ and a set of active vertices $\Sact$. This function then picks any vertex $u \in \Sact$ and tries to select at most $k$ neighbors of $u$ to add into $\Sinc$ and $\Sact$; the function then remove $u$ from $\Sact$ and recursively call itself on this new sets. (Note that in this step it also makes sure that the set $\Sinc$ remains colorful; otherwise, the recursive call is not made.) The function stops when $\Sact$ is empty and update \text{DP}$[C]$ to be the minimum between the current value and $E_G(\Sinc)$. As the reader may have already noticed, while this algorithm records (exactly) the correct answer into the table, it is very slow. In particular, if say we run this on a complete graph, then it will generates $n^{\Theta(k)}$ subgraphs.

The algorithm of Gupta, Lee and Li~\cite{GLL18a,GLL18}, while not stated in this exact form, can be viewed as a more careful implementation of this approach. In particular, they use the observation of Marx~\cite{Marx08} (that was also outlined outline in Section~\ref{sec:results}): for unweighted graphs, if the optimal solution has any vertex with degree at least $\binom{k}{2}/\varepsilon$, simply picking the $k$ vertices with minimum degrees would already be an $(1 + \varepsilon)$-approximate solution. In other words, one may assume that the graph has degree bounded by $\binom{k}{2}/\varepsilon = O(k^2/\varepsilon)$. When this is the case, the algorithm from the previous paragraph in fact runs in $O(k/\varepsilon)^{O(k)} \cdot \poly(n)$ time; the reason is that the number of choices to be made when adding a vertex is only $O(k^2/\varepsilon)$ instead of $n$ as before. Hence, the running time becomes $O(k^2/\varepsilon)^k \cdot \poly(n) = (k/\varepsilon)^{O(k)} \cdot \poly(n)$.

To obtain further speed up, we observe that, if at most $\varepsilon/2$ fraction of neighbors of a vertex $u$ lies in the optimal solution, then ignoring all of them completely while branching would change the number of covered edges by factor of no more than $\varepsilon$. (This is shown formally in the proof below.) In other words, instead of trying all subsets of at most $k$ neighbors of $u$. We may only try subsets with at least $d \varepsilon / 2$ (and at most $k$) neighbors of $u$ where $d$ is the degree of $u$. The point here is that, while there are still $\exp(d)$ branches, we are adding at least $d\varepsilon/2$ vertices. Hence, the ``branching factor per vertex added'' is small: namely, for $j \geqs d\varepsilon/2$, the ``branching factor per vertex added'' is only $\binom{d}{j}^{1/j} \leqs ed/j \leqs O(1/\varepsilon)$. This indeed gives the running time of $O(1/\varepsilon)^{O(k)} \cdot \poly(n)$. (Note that such branching may result in a connected component being separated; however, when this is the case, the number of edges between the generated parts must be small anyway.)

Let us now shift our discussion to the edge-weighted graph case. Once again, as we will show formally in the proof, throwing away the edges adjacent to $u$ with total weight at most $(\varepsilon/2) \cdot \wdeg(u)$ only affects the solution value by no more than $\varepsilon$ factor. However, this observation alone is not enough; specifically, unlike the unweighted case, this does not guarantee that many vertices must be selected. As an example, if there is a vertex $v$ where $w_{\{u, v\}} = 0.5 \cdot \wdeg(u)$, then even the set $\{v\}$ should be consider when we branch. Nevertheless, it is once again possible to show that, we can select a collection $\cT$ of representative subsets such that, for any set $S \subseteq V_G$ (the true optimal set), we can arrive in a subset in $\cT$ by throwing away vertices whose edges to $u$ are of total weight at most $(\varepsilon/2) \cdot \wdeg(u)$. In other words, it is ``safe'' to just consider branching with subsets in $\cT$ instead of all subsets. Again, the collection $\cT$ will satisfy the property that the ``branching factor per vertex added'' is small; that is, for any $j$, the number of $j$-element subsets that belong to $\cT$ is at most $O(1/\varepsilon)^j$. The existence and efficient construction of such $\cT$ is stated below in a more general form. Note that, in the context of subgraph generation algorithm, one should think of $\delta = \varepsilon / 2$, $\ell = n - 1$ (all vertices except $u$ itself) and $P = \wdeg(u) - w_{\{u\}}$.

\begin{lemma} \label{lem:eps-net}
Let $a_1, \dots, a_\ell \geqs 0$ be any non-negative real numbers, let $\delta > 0$ be any positive real number, and let $P = \sum_{i \in [\ell]} a_i$. Then, there exists a collection $\cT$ of subsets of $[\ell]$ such that
\begin{enumerate}[(i)]
\item For all $j \in [\ell]$, we have $\left|\cT \cap \binom{[\ell]}{j}\right| \leqs O(1/\delta)^j$, and, \label{property:size-bound}
\item For any $S \subseteq [\ell]$, there exists $T \in \cT$ such that $T \subseteq S$ and $\sum_{i \in (S \setminus T)} a_i \leqs \delta \cdot P$. \label{property:eps-net}
\end{enumerate} 
Moreover, for any $j \in [\ell]$, $\cT \cap \binom{[\ell]}{\leqs j}$ can be computed in $O(1/\delta)^{O(j)}\ell^{O(1)}$ time.
\end{lemma}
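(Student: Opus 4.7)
After sorting so that $a_1 \geqs \cdots \geqs a_\ell$ (WLOG), my plan is to define
\[
\cT \;:=\; \{T \subseteq [\ell] : \max(T) \leqs 2|T|/\delta\},
\]
with the convention $\max(\emptyset) = 0$. Property~(i) is then immediate: each $T \in \cT$ of size $j$ lives in $\{1, \ldots, \lceil 2j/\delta \rceil\}$, so $|\cT \cap \binom{[\ell]}{j}| \leqs \binom{\lceil 2j/\delta \rceil}{j} \leqs (2e/\delta)^j = O(1/\delta)^j$; and $\cT \cap \binom{[\ell]}{\leqs j}$ can be listed by enumerating, for each $j' \leqs j$, the $j'$-subsets of $\{1, \ldots, \lceil 2j'/\delta \rceil\}$ in $O(1/\delta)^{O(j)}\ell^{O(1)}$ time.

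For property~(ii), given $S \subseteq [\ell]$ with sorted elements $s_1 < s_2 < \cdots < s_m$ (set $s_0 := 0$), I would take
\[
k^* \;:=\; \max\{k \in \{0, 1, \ldots, m\} : s_k \leqs 2k/\delta\}, \qquad T \;:=\; \{s_1, \ldots, s_{k^*}\}.
\]
Membership $T \in \cT$ is immediate since $\max(T) = s_{k^*} \leqs 2k^*/\delta = 2|T|/\delta$. By the maximality of $k^*$, each $r > k^*$ satisfies $s_r > 2r/\delta$, hence $s_r \geqs \lfloor 2r/\delta \rfloor + 1$, and the monotonicity of $a$ gives $a_{s_r} \leqs a_{\lfloor 2r/\delta \rfloor + 1}$. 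So everything reduces to the tiling inequality
\[
\sum_{r \geqs 1} a_{\lfloor r\kappa \rfloor + 1} \;\leqs\; \delta P, \qquad \text{where } \kappa := 2/\delta,
\]
which will then yield $\sum_{i \in S \setminus T} a_i = \sum_{r = k^*+1}^{m} a_{s_r} \leqs \delta P$.

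The tiling inequality is the key technical step. I would partition (the relevant part of) $\{2, 3, \ldots\}$ into the disjoint intervals $R_r := \{\lfloor (r-1)\kappa\rfloor + 2, \ldots, \lfloor r\kappa\rfloor + 1\}$, each of size $\geqs \lfloor \kappa \rfloor \geqs \kappa - 1$. Since $a$ is non-increasing, $a_{\lfloor r\kappa\rfloor + 1}$ is the smallest of $\{a_i : i \in R_r\}$, so $|R_r| \cdot a_{\lfloor r\kappa\rfloor + 1} \leqs \sum_{i \in R_r} a_i$; summing and using disjointness then gives
\[
\sum_{r \geqs 1} a_{\lfloor r\kappa\rfloor+1} \;\leqs\; \frac{1}{\kappa - 1} \sum_{r \geqs 1} \sum_{i \in R_r} a_i \;\leqs\; \frac{P}{\kappa - 1} \;=\; \frac{\delta P}{2 - \delta} \;\leqs\; \delta P
\]
whenever $\delta \leqs 1$ (the case $\delta > 1$ is trivial with $T = \emptyset$). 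The rest is bookkeeping. One potential pitfall worth flagging is that the most natural ``greedy'' construction (include $s_r$ iff $s_r \leqs 2(|T|+1)/\delta$ when processed in increasing order) can get stuck at $T = \emptyset$ even when a much larger $T \subseteq S$ inside $\cT$ exists; the threshold-based choice of $k^*$ sidesteps this trap and dovetails cleanly with the tiling bound.
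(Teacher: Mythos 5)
Your proposal is correct and is essentially the paper's own argument: sort the $a_i$ in decreasing order, let $\cT$ consist of the $j$-element subsets of a prefix of length $O(j/\delta)$, pick $T$ as the maximal prefix of $S$ meeting the threshold condition, and bound the discarded tail by an averaging/tiling argument over consecutive blocks of length $\Theta(1/\delta)$. The only differences are cosmetic (threshold $2k/\delta$ versus the paper's $k\lceil 1/\delta\rceil$, and the resulting $\delta P/(2-\delta)$ versus the paper's direct $\delta P$ bound).
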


\begin{proof}
Let $\pi: [\ell] \to [\ell]$ be any permutation such that $a_{\pi(1)} \geqs \cdots \geqs a_{\pi(\ell)}$. For each $j \in [\ell]$, we construct $\cT \cap \binom{[\ell]}{j}$ by taking all $j$-element subsets of $\{\pi(1), \dots, \pi(\min\{j \cdot \lceil 1/\delta\rceil, \ell\})\}$. We have
\begin{align*}
\left|\cT \cap \binom{[\ell]}{j}\right| \leqs \binom{j \cdot \lceil 1/\delta\rceil}{j} \leqs \left(\frac{e j \cdot \lceil 1/\delta\rceil}{j}\right)^j \leqs O(1/\delta)^j.
\end{align*}
Moreover, it is clear that the set $\cT \cap \binom{[\ell]}{j}$ can be generated in time polynomial in the size of the set and $\ell$, which is $O(1/\delta)^{O(j)}\ell^{O(1)}$ as desired.

Finally, we will prove~\ref{property:eps-net}. Consider any subset $S \subseteq [\ell]$ and suppose that its elements are $\pi(i_1), \dots, \pi(i_m)$. We pick the set $T$ as follows: let $t$ be the largest index such that $i_t \leqs t \cdot \lceil 1/\delta \rceil$ and let $T = \{\pi(i_1), \dots, \pi(i_t)\}$. Since $i_t \leqs t \cdot \lceil 1/\delta \rceil$, $T$ is a $t$-element subset from $\{\pi(1), \dots, \pi(\min\{t \cdot \lceil 1/\delta \rceil, \ell\})$ and hence $T$ belongs to $\cT$. To prove~\ref{property:eps-net}, observe that, by definition of $t$, we have $i_g > g \cdot \lceil 1/\delta \rceil$ for all $g > t$. This means that
\begin{align*}
\sum_{i \in (S \setminus T)} a_i = \sum_{g=t+1}^{m} a_{\pi(g)} \leqs \sum_{g=t+1}^{m}\left(\frac{1}{\lceil 1/\delta \rceil} \sum_{i=(g - 1) \cdot \lceil 1/\delta \rceil + 1}^{g \cdot \lceil 1/\delta \rceil} a_i\right) &\leqs \frac{1}{\lceil 1/\delta \rceil} \sum_{i\in[\ell]} a_i \leqs \delta \cdot P,
\end{align*}
which concludes the proof.
\end{proof}

With the above lemma ready, we proceed to the proof of Theorem~\ref{thm:minkvc-colorful}.

\begin{proof}[Proof of Theorem~\ref{thm:minkvc-colorful}]
The proof is based on the ideas outlined above. For simplicity, we will describe the algorithm that computes an approximation for $\optmin(G, k, \chi)$ rather than a subset $S \subseteq V_G$, i.e., it will output a number between $\optmin(G, k, \chi)$ and $(1 + \varepsilon) \cdot \optmin(G, k, \chi)$. It is not hard to see that the algorithm can be turned to provide a desired set as well.

As stated above, the algorithm consists of two parts: the subgraph generation part, and the dynamic programming part. The subgraph generation algorithm, which is shown below as Algorithm~\ref{alg:subgraph-generation}, is very much the same as stated earlier: it takes as an input the sets $S_{\textsc{active}}$ and $S_{\textsc{included}}$ (in addition to $(G, k, \chi)$). If there is no more active vertex in $S_{\textsc{active}}$, then it just updates the table DP to reflect $E_G(S_{\textsc{included}})$. Otherwise, it pick a vertex $u$ and try to branch on every representative $T$ from $\cT$ from Lemma~\ref{lem:eps-net} where the $\{a_i\}$'s are defined as $a_v = w_{\{u, v\}}$ for all $v \ne \{u\}$ and $\delta = \varepsilon/2$.

\begin{algorithm}
\caption{}\label{alg:subgraph-generation}
\begin{algorithmic}[1]
\Procedure{SubgraphGen$(G, k, \chi, S_{\textsc{active}}, S_{\textsc{included}})$}{}
\If{$S_{\textsc{active}} = \emptyset$}
\State $\text{DP}[\chi(S_{\textsc{included}})] \leftarrow \min\{\text{DP}[\chi(S_{\textsc{included}})], E_G(S_{\textsc{included}})\}$
\Else
\State $u \leftarrow$ Any element of $S_{\textsc{active}}$
\State $S_{\textsc{active}} \leftarrow S_{\textsc{active}} \setminus \{u\}$
\State $\cT \leftarrow$ Subsets generated by Lemma~\ref{lem:eps-net} for $a_v = w_{\{u, v\}}$ for all $v \ne u$ and $\delta = \varepsilon / 2$.
\For{$T \subseteq \cT \cap \binom{V_G \setminus \{u\}}{\leqs k}$}
\If{$T \cap S_{\textsc{included}} = \emptyset$ \textbf{and} $S_{\textsc{included}} \cup T$ is colorful} \label{step:check}
\State {\sc SubgraphGen}$(G, k, \chi, S_{\textsc{active}} \cup T, S_{\textsc{included}} \cup T)$ \label{step:recurse}
\EndIf
\EndFor
\EndIf
\EndProcedure
\end{algorithmic}
\end{algorithm}

The dynamic programming (main algorithm) proceeds in a rather straightforward manner: after initializing the table, the main algorithm calls the subgraph generation subroutine starting with each vertex. Then, it uses dynamic programming to updates the table DP to reflect the fact that the answer may consist of many connected components. Finally, it outputs DP$[\{1, \dots, k\}]$. The pseudo-code for this is given below as Algorithm~\ref{alg:dp}.

\begin{algorithm}
\caption{}\label{alg:dp}
\begin{algorithmic}[1]
\Procedure{Min\_$k$-VC$(G, k, \chi)$}{}
\For{$C \subseteq [k]$}
\State $\text{DP}[C] \leftarrow \infty$
\EndFor
\For{$u \in V_G$}
\State \textsc{SubgraphGen}$(G, k, \chi, \{u\}, \{u\})$
\EndFor
\For{$C \subseteq [k]$ in increasing order of $|C|$} \label{step:dp}
\For{$C' \subseteq C$}
\State $\text{DP}[C] \leftarrow \min\{\text{DP}[C], \text{DP}[C'] + \text{DP}[C \setminus C']\}$
\EndFor
\EndFor
\State \Return $\text{DP}[[k]]$ 
\EndProcedure
\end{algorithmic}
\end{algorithm}

{\bf Running Time Analysis.} We will show that the running time of the algorithm is indeed $O(1/\varepsilon)^{O(k)}$. It is obvious that the dynamic programming step takes only $2^{O(k)} \cdot \poly(n)$ time, and it is not hard to see that each call to {\sc SubgraphGen}, without taking into account the time spent in the recursed calls (Step~\ref{step:recurse}), takes only $O(1/\varepsilon)^{O(k)} \cdot \poly(n)$ time (because the bottleneck is the generation of $\cT \cap \binom{V_G \setminus \{u\}}{\leqs k}$ and this takes only $O(1/\varepsilon)^{O(k)} \cdot \poly(n)$ time as guaranteed by Lemma~\ref{lem:eps-net}). Thus, it suffices for us to show that, for each $u \in V$, \textsc{SubgraphGen}$(G, k, \chi, \{u\}, \{u\})$ only generates $O(1/\varepsilon)^{O(k)} \cdot \poly(n)$ leaves in the recursion tree. (By \emph{leaves}, we refer to calls {\sc SubgraphGen}$(G, k, \chi, S_{\textsc{active}}, S_{\textsc{included}})$ where $S_{\textsc{active}} = \emptyset$. Note that, if {\sc SubgraphGen}$(G, k, \chi, \emptyset, S_{\textsc{included}})$ is called multiple times for the same $S_{\textsc{included}}$, we count each call separately.) The proof is a formalization of the ``branching factor per vertex added'' idea outlined before the proof.

In fact, we will prove an even more general statement: for all colorful subsets $S_{\textsc{active}} \subseteq S_{\textsc{included}}$, {\sc SubgraphGen}$(G, k, \chi, S_{\textsc{active}}, S_{\textsc{included}})$ results in only at most $(C/\varepsilon)^{2k - |S_{\textsc{included}}| - |S_{\textsc{included}}\setminus S_{\textsc{active}}|}$ leaves for some $C > 0$. In particular, let $C' > 0$ be a constant such that Lemma~\ref{lem:eps-net} gives the bound $|\cT \cap \binom{[\ell]}{j}| \leqs (C'/\delta)^j$; we will prove the statement for $C = 2C' + 2$.

We prove by induction on decreasing order of $|S_{\textsc{included}}|$ and $|S_{\textsc{included}}\setminus S_{\textsc{active}}|$ respectively. In the base case where $|S_{\textsc{included}}| = k$, the statement is obviously true, since the condition in Line~\ref{step:check} ensures that no more subroutine is executed. In another base case where $|S_{\textsc{included}}\setminus S_{\textsc{active}}| = |S_{\textsc{included}}|$, the statement is also obviously true since, in this case, we simply have $S_{\textsc{active}} = \emptyset$.

For the inductive step, suppose that, for some $0 \leqs i < k$ and $1 \leqs j \leqs i$, the statement holds for all colorful subsets $S_{\textsc{active}} \subseteq S_{\textsc{included}}$such that $|S_{\textsc{included}}| > i$, or, $|S_{\textsc{included}}| = i$ and $|S_{\textsc{active}}| < j$. Now, consider any colorful subsets $S_{\textsc{active}} \subseteq S_{\textsc{included}}$ such that $|S_{\textsc{included}}| = i$ and $|S_{\textsc{active}}| = j$. We will argue below that {\sc SubgraphGen}$(G, k, \chi, S_{\textsc{active}}, S_{\textsc{included}})$ results in at most $(C'/\varepsilon)^{2k - i - (i - j)}$ leaves.

To do so, first observe that (1) in every recursive call, $|S_{\textsc{included}}\setminus S_{\textsc{active}}|$ increases by one (namely $u$ becomes inactive) and (2) for every $0 \leqs t \leqs k - i$,  the number of recursive calls for which $|S_{\textsc{included}}|$ increases by $t$ is at most $|\cT \cap \binom{V_G \setminus \{u\}}{t}| \leqs (C'/\varepsilon)^t$. As a result, by the inductive hypothesis, the number of leaves generated by {\sc SubgraphGen}$(G, k, \chi, S_{\textsc{active}}, S_{\textsc{included}})$ is at most
\begin{align*}
\sum_{t=0}^{k - i} (C'/\varepsilon)^t \cdot (C/\varepsilon)^{2k - (i + t) - (i - j + 1)} 
&= (C/\varepsilon)^{2k - i - (i - j + 1)} \cdot \left(\sum_{t=0}^{k - i} (C'/C)^t\right) \\
(\text{Since } C \geqs 2C') &\leqs (C/\varepsilon)^{2k - i - (i - j + 1)} \cdot 2 \\
(\text{Since } C \geqs 2) &\leqs (C/\varepsilon)^{2k - i - (i - j)}
\end{align*}
as desired. 

In conclusion, for all colorful subsets $S_{\textsc{active}} \subseteq S_{\textsc{included}}$, {\sc SubgraphGen}$(G, k, \chi, S_{\textsc{active}}, S_{\textsc{included}})$ generates at most $(C/\varepsilon)^{2k - |S_{\textsc{included}}| - |S_{\textsc{included}}\setminus S_{\textsc{active}}|}$ leaves. As argued above, this implies that the running time of the algorithm is at most $O(1/\varepsilon)^{O(k)} \cdot \poly(n)$.

{\bf Approximation Guarantee Analysis.} We will now show that the output lies between $\opt_{\text{Min }k\text{-VC}}(G, k, \chi)$ and $(1 + \varepsilon) \cdot \opt_{\text{Min }k\text{-VC}}(G, k, \chi)$. For convenience, let us define $\text{DP}^*$ to be the value of table $\text{DP}$ filled by {\sc SubgraphGen} calls; that is, this is the table before Line~\ref{step:dp} in Algorithm~\ref{alg:dp}. Observe the following relationship between DP and DP$^*$:
\begin{align} \label{eq:obs-partition}
\text{DP}[C] = \min_{\text{Partition } P \text{ of } C} \sum_{C' \in P} \text{DP}^*[C'].
\end{align}
It is now rather simple to see that the output is at least $\opt_{\text{Min }k\text{-VC}}(G, k, \chi)$. To do so, observe that, for any $C \subseteq [k]$, \text{DP}$^*[C]$ is equal $E_G(S_C)$ for some colorful $S_C \subseteq V_G$ with $\chi(S_C) = C$. This, together with~\eqref{eq:obs-partition}, implies that the output must be equal to $\sum_{C' \in P} E_G(S_{C'})$ for some partition $P$ of $[k]$ and colorful $S_{C'}$'s such that $\chi(S_{C'}) = C'$. Observe that this value is at least $E_G\left(\bigcup_{C' \in P} S_{C'}\right)$, which is at least $\opt_{\text{Min }k\text{-VC}}(G, k, \chi)$ since $\bigcup_{C' \in P} S_{C'}$ is a colorful set of size $k$.

Next, we will show that the output (i.e. DP$[[k]]$) is at most $(1 + \varepsilon) \cdot \opt_{\text{Min }k\text{-VC}}(G, k, \chi)$. The following proposition is at the heart of this proof:

\begin{proposition} \label{prop:recurse}
For any non-empty colorful subset $S \subseteq V_G$, there exists a non-empty $S^{\rep} \subseteq S$ such that 
\begin{align*}
\text{DP}^*[\chi(S^{\rep})] \leqs E_G(S^{\rep}) \text{ and } E_G(S^{\rep}, S \setminus S^{\rep}) \leqs  \delta \cdot \wdeg(S^{\rep}). 
\end{align*}
\end{proposition}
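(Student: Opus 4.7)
The plan is to construct $S^{\rep}$ by simulating a single root-to-leaf branch of the recursion tree of {\sc SubgraphGen}, using $S$ itself to guide which $T$ is picked at each branching step. First pick any $u_0 \in S$ and initialize $S_{\textsc{active}} = S_{\textsc{included}} = \{u_0\}$. Then, as long as $S_{\textsc{active}} \ne \emptyset$, take any $u \in S_{\textsc{active}}$ and apply Lemma~\ref{lem:eps-net} (with $a_v := w_{\{u,v\}}$ for $v \ne u$ and $\delta = \varepsilon/2$) to the subset $(S \setminus S_{\textsc{included}}) \subseteq V_G \setminus \{u\}$. Property~(ii) of the lemma yields a $T \in \cT$ with $T \subseteq S \setminus S_{\textsc{included}}$ and $\sum_{v \in (S \setminus S_{\textsc{included}}) \setminus T} w_{\{u,v\}} \leq (\varepsilon/2) \cdot \wdeg(u)$; update the active and included sets exactly as Algorithm~\ref{alg:subgraph-generation} does for this choice of $T$. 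Because $S$ is colorful, $S_{\textsc{included}} \subseteq S$ remains colorful throughout, so the disjointness/colorfulness guard on Line~\ref{step:check} is always satisfied. Define $S^{\rep}$ to be the final value of $S_{\textsc{included}}$: by construction $\emptyset \ne \{u_0\} \subseteq S^{\rep} \subseteq S$.

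The bound $\text{DP}^*[\chi(S^{\rep})] \leq E_G(S^{\rep})$ then falls out immediately: the simulation above is literally a branch explored by the call {\sc SubgraphGen}$(G,k,\chi,\{u_0\},\{u_0\})$ inside Algorithm~\ref{alg:dp}, so the base case is eventually reached and the update $\text{DP}[\chi(S^{\rep})] \gets \min\{\text{DP}[\chi(S^{\rep})], E_G(S^{\rep})\}$ is performed.

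For the second inequality, the plan is to charge each boundary edge to its unique endpoint in $S^{\rep}$. Fix an edge $\{u,v\}$ with $u \in S^{\rep}$ and $v \in S \setminus S^{\rep}$. At the moment $u$ is processed in the simulation, $v$ is not yet in $S_{\textsc{included}}$ (otherwise $v$ would end up in $S^{\rep}$), so $v$ lies in the subset $S \setminus S_{\textsc{included}}$ to which the lemma was applied; moreover $v \notin T$, else $v$ would have been added to $S^{\rep}$. Consequently $v$ appears in the discarded set $(S \setminus S_{\textsc{included}}) \setminus T$ at the processing step of $u$, and so summing the lemma's bound over all $u \in S^{\rep}$ yields
\[
E_G(S^{\rep},\, S \setminus S^{\rep}) \;\leq\; \sum_{u \in S^{\rep}} \sum_{v \in S \setminus S^{\rep}} w_{\{u,v\}} \;\leq\; \sum_{u \in S^{\rep}} \tfrac{\varepsilon}{2} \cdot \wdeg(u) \;=\; \delta \cdot \wdeg(S^{\rep}).
\]

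The main conceptual point, and the only nontrivial design choice, is to apply Lemma~\ref{lem:eps-net} to the slice $S \setminus S_{\textsc{included}}$ rather than to the full neighborhood of $u$ in $V_G$; once this viewpoint is fixed the charging is immediate, and no double-counting arises because each boundary edge has exactly one endpoint in $S^{\rep}$.
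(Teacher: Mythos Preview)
Your proof is correct and follows essentially the same approach as the paper: both construct $S^{\rep}$ by tracing a single branch of {\sc SubgraphGen} starting from an arbitrary vertex of $S$, at each step invoking property~(ii) of Lemma~\ref{lem:eps-net} on the subset $S \setminus S_{\textsc{included}}$ to select a $T \in \cT$ with small discarded weight, and then summing the per-vertex bound $\sum_{v \in S \setminus S^{\rep}} w_{\{u,v\}} \leqs \delta \cdot \wdeg(u)$ over $u \in S^{\rep}$. The only point you leave implicit is that $|T| \leqs |S| \leqs k$ (since $S$ is colorful), so the chosen $T$ indeed lies in $\cT \cap \binom{V_G \setminus \{u\}}{\leqs k}$ and the branch is actually explored by the algorithm; this is a triviality.
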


\begin{subproof}[Proof of Proposition~\ref{prop:recurse}]
Let $v$ be any vertex in $S$. Let us consider the call {\sc SubgraphGen}$(G, \chi, k, \{v\}, \{v\})$. Consider traversing the following single branch in every execution of Step~\ref{step:recurse}: pick $T \in \cT$ such that $T \subseteq (S \setminus S_{\textsc{included}})$ and $\sum_{i \in (S \setminus S_{\textsc{included}}) \setminus T} w_{\{u, i\}} \leqs \delta \cdot \sum_{i \in V_G} w_{\{u, i\}} = \delta \cdot \wdeg_G(u)$. (We remark that such $T$ is guaranteed to exist by Lemma~\ref{lem:eps-net}; if there are more than one such $T$'s, just choose an arbitrary one.) Suppose that always choosing such branch ends in a call {\sc SubgraphGen}$(G, k, \chi, \emptyset, S^{\rep})$. We will show that $S^{\rep}$ satisfies the desired properties.

First of all, observe that the fact we always choose $T \subseteq S$ ensures that $S^{\rep} \subseteq S$ and that, since {\sc SubgraphGen}$(G, k, \chi, \emptyset, S^{\rep})$ is executed, we indeed have DP$[\chi(S^{\rep})] \leqs E_G(S^{\rep})$. Hence, we are only left to argue that $E_G(S^{\rep}, S \setminus S^{\rep}) \leqs \delta \cdot \wdeg(S^{\rep})$. To see that this is the case, observe that the second property of the $T$'s chosen implies that $\sum_{i \in S \setminus S^{\rep}} w_{\{u, i\}} \leqs \delta \cdot \wdeg(u)$. Summing this inequality over all $u \in S^{\rep}$ immediately yields $E_G(S^{\rep}, S \setminus S^{\rep}) \leqs \delta \cdot \wdeg(S^{\rep})$.
\end{subproof}

With Proposition~\ref{prop:recurse} ready, we can now prove that DP$[[k]] \leqs (1 + \varepsilon) \cdot \opt_{\text{Min }k\text{-VC}}(G, k, \chi)$. Let $S_{\opt} \subseteq V_G$ denote an optimal solution to the problem, i.e., $S_{\opt}$ is a colorful $k$-vertex subset such that $E_G(S_{\opt}) = \optmin(G, k, \chi)$. Let $S_1 = S_{\opt}$. For $i = 1, \dots$, if $S_i \ne \emptyset$, we apply Proposition~\ref{prop:recurse} to find a non-empty subset $S^{\rep}_i \subseteq S_i$ such that 
\begin{align} \label{eq:recurse}
\text{DP}^*[\chi(S^{\rep}_i)] \leqs E_G(S^{\rep}_i) \text{ and } E_G(S^{\rep}_i, S_{i + 1}) \leqs  \delta \cdot \wdeg(S^{\rep}_i). 
\end{align}
where $S_{i + 1} = S_i \setminus S_i^{\rep}$.

Observe here that $\{S^{\rep}_i\}_{i \geqs 1}$ is a partition of $S_{\opt}$. Thus, from~\eqref{eq:obs-partition} and~\eqref{eq:recurse}, we have 
\begin{align} \label{eq:dp-upper}
\text{DP}[[k]] \stackrel{\eqref{eq:obs-partition}}{\leqs} \sum_{i \geqs 1} \text{DP}^*[\chi(S^{\rep}_i)] \stackrel{\eqref{eq:recurse}}{\leqs} \sum_{i \geqs 1} E_G(S^{\rep}_i). 
\end{align}

On the other hand, observe that $E_G(S_i) = E_G(S^{\rep}_i) + E_G(S_{i + 1}) - E_G(S^{\rep}_i, S_{i + 1})$. Thus, we have
\begin{align}
E_G(S_{\opt}) = &\sum_{i \geqs 1} (E_G(S_i) - E_G(S_{i + 1})) \nonumber \\
&= \sum_{i \geqs 1} E_G(S^{\rep}_i) - \sum_{i \geqs 1} E_G(S^{\rep}_i, S_{i + 1}) \nonumber \\
&\stackrel{\eqref{eq:recurse}}{\geqs} \sum_{i \geqs 1} E_G(S^{\rep}_i) - \delta \cdot \sum_{i \geqs 1} \wdeg(S^{\rep}_i) \nonumber \\
&= \sum_{i \geqs 1} E_G(S^{\rep}_i) - \delta \cdot \wdeg(S_{\opt}). \label{eq:sopt-bound}
\end{align}
Finally, from~\eqref{eq:dp-upper},~\eqref{eq:sopt-bound} and $\wdeg(S_{\opt}) \leqs 2 \cdot E_G(S_{\opt})$, we have DP$[[k]] \leqs (1 + 2\delta) \cdot E_G(S_{\opt}) = (1 + \varepsilon) \cdot \opt_{\text{Min }k\text{-VC}}(G, k, \chi)$ which concludes the proof.
\end{proof}

\subsection{Non-Existence of Polynomial Size Approximate Kernel} \label{sec:no-kernel}

The above FPT-AS and the equivalence between existence of FPT approximation algorithm and approximate kernel~\cite{LPRS17} immediately implies that there exists an $(1 - \varepsilon)$-approximate kernel for Weighted Min $k$-VC. However, this naive approach results in an approximate kernel of size $O(1/\varepsilon)^{O(k)}$. A natural question is whether there exists a polynomial-size approximate kernel for Weighted Min $k$-VC (similar to Weighted Max $k$-VC). In this section, we show that the answer to this question is likely a negative, assuming a variant of the Small Set Expansion Conjecture.

Our proof follows the framework of Lokshtanov \etal~\cite{LPRS17}. Let us recall that an equivalence relation $R$ over strings on a finite alphabet $\Sigma$ is said to be \emph{polynomial} if (i) whether $x \sim y$ can be checked in $\poly(|x| + |y|)$ time and (ii) for every $n \in \N$, $\Sigma^n$ has at most $\poly(n)$ equivalence classes. The framework of Lokshtanov \etal\ uses the notion of $\alpha$-gap cross composition, as defined below. (This is based on the cross composition in the exact settings from~\cite{BodlaenderJK14}.)

\begin{definition}[$\alpha$-gap cross composition~\cite{LPRS17}] Let $L$ be a language and $\Pi$ be a parameterized minimization problem. We say that $L$ \emph{$\alpha$-gap
cross composes} into $\Pi$ (for $\alpha \leqs 1$), if there is a polynomial equivalence relation $R$ and an
algorithm which, given strings $x_1, \cdots, x_t$ from the same equivalence class of $R$, computes
an instance $(y, k)$ of $\Pi$ and $r \in \mathbb{R}$, in time $\poly(\sum_{i=1}^t |x_i|)$ such that the following holds:
\begin{enumerate}[(i)]
\item (Completeness) $\opt_{\Pi}(y, k) \leqs r$ if $x_i \in L$ for some $1 \leqs i \leqs t$,
\item (Soundness) $\opt_{\Pi}(y, k) > r\alpha$ if $x_i \notin L$ for all $i \in [t]$, and,
\item $k$ is bounded by a polynomial in $\log t + \max_{1 \leqs i \leqs t} |x_i|$.
\end{enumerate}
\end{definition}

A parameterized optimization problem is said to be \emph{nice} if, given a solution to the problem, its cost can be computed in polynomial time. (Clearly, Weighted Min $k$-VC is nice.) The main tool from~\cite{LPRS17} is that any problem that $\alpha$-gap cross composes to a nice parameterized optimization problem $\Pi$ must be in $\coNP/\poly$ if $\Pi$ has $\alpha$-approximate kernel\footnote{We note that the result of~\cite{LPRS17} works even with a weaker notion than $\alpha$-approximate kernel called \emph{$\alpha$-approximate compression}; see Definition 5.5 and Theorem 5.9 of~\cite{LPRS17} for more details.}. In other words, if an NP-hard language $\alpha$-gap cross composes to $\Pi$, then $\Pi$ does not have $\alpha$-approximate kernel unless $\NP \subseteq \coNP/\poly$.

\begin{lemma}[\cite{LPRS17}] \label{lem:compose-lowerbound}
Let $L$ be a language and $\Pi$ be a nice parameterized optimization problem. If $L$ $\alpha$-gap cross composes to $\Pi$, and $\Pi$ has a polynomial size $\alpha$-approximate kernel, then $L \in \coNP/\poly$.
\end{lemma}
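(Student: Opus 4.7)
The plan is to follow the standard cross-composition framework (cf.~\cite{BodlaenderJK14}): turn the hypothesized polynomial-size $\alpha$-approximate kernel into an OR-cross-composition of $L$ into an NP language, and then invoke a Fortnow--Santhanam style distillation lower bound to conclude $L \in \coNP/\poly$.

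More concretely, write $(\cA, \cB)$ for the hypothesized polynomial-size $\alpha$-approximate kernel of $\Pi$. Given $R$-equivalent strings $x_1, \ldots, x_t$, I would (i) apply the $\alpha$-gap cross composition to produce in time $\poly(\sum_i |x_i|)$ an instance $(y, k)$ and threshold $r$ with $k \leqs \poly(\log t + \max_i |x_i|)$; (ii) apply $\cA$ to $(y, k)$ to obtain an instance $(y', k')$ of size at most $\poly(k) \leqs \poly(\log t + \max_i |x_i|)$; and (iii) consider the verifier $\phi$ that, on input the tuple $(y, y', k, k', r)$, accepts iff there exists a solution $s'$ of $(y', k')$ such that the lifted solution $\cB(s')$ has cost at most $\alpha r$ in the original instance $(y, k)$. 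Because $\Pi$ is nice and both $\cA, \cB$ run in polynomial time, $\phi$ is an NP predicate, and the whole tuple still has bitlength $\poly(\log t + \max_i |x_i|)$.

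The gap-preservation step then reads as follows. In the completeness case (some $x_i \in L$), the composition guarantees $\optPi(y, k) \leqs r$; feeding $\cB$ any optimal (hence $1$-approximate) solution of the kernel yields, by the approximate-kernel property, an $\alpha$-approximate solution of $(y, k)$ whose cost is at most $\alpha \cdot \optPi(y, k) \leqs \alpha r$, so $\phi$ accepts. In the soundness case (no $x_i \in L$), we have $\optPi(y, k) > \alpha r$, and every lifted $\cB(s')$ is a feasible solution of $(y, k)$, hence of cost strictly greater than $\alpha r$; so $\phi$ rejects. We therefore have a polynomial-time OR-cross-composition from $L$ into the NP language $\phi$, mapping $t$ instances of size at most $n$ to one instance of size $\poly(\log t + n)$.

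The proof then concludes by invoking the standard distillation lower bound: a language admitting a polynomial OR-cross-composition (respecting a polynomial equivalence relation) into an NP language must lie in $\coNP/\poly$, via the Fortnow--Santhanam complementary-witness argument as adapted to cross-composition in~\cite{BodlaenderJK14}. The main obstacle is purely bookkeeping: one has to verify that $\cA$ really shrinks to $\poly(k)$ rather than $\poly(|y|)$ size (so the bound is controlled by $\log t + n$ and not by $\sum_i |x_i|$), that the equivalence relation $R$ can be lifted through the composition so the OR-distillation hypothesis is applied only within size classes, and that the sign convention on $\alpha$ (min vs.\ max) is arranged so that ``$\phi$ accepts'' coincides exactly with ``some $x_i \in L$'', with no boundary case ambiguity. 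Once these routine checks are made the conclusion is immediate.
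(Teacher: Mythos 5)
The paper does not actually prove this lemma; it is imported as a black box from~\cite{LPRS17}, so I am comparing your sketch against the known proof there. Your gap-preservation logic is fine (an optimal kernel solution lifts to a solution of cost at most $\alpha r$ in the completeness case, while in the soundness case every lifted solution costs more than $\alpha r$), but the complexity-theoretic packaging has a fatal flaw that you wave off as bookkeeping. Your target instance is the tuple $(y, y', k, k', r)$, and you assert it has bitlength $\poly(\log t + \max_i |x_i|)$. It does not: the cross composition only bounds the \emph{parameter} $k$ by $\poly(\log t + \max_i |x_i|)$, while $|y|$ may be as large as $\poly(\sum_i |x_i|) = \poly(t \cdot \max_i |x_i|)$ --- indeed, in the composition used for Lemma~\ref{lem:no-kernel} the composed graph is the disjoint union of all $t$ inputs. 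So the map you build is not an OR-distillation and the Fortnow--Santhanam covering argument cannot be invoked. Dropping $y$ from the instance does not help: both the solution-lifting algorithm $\cB$ and the evaluation of the lifted solution's cost require $y$, and the approximate-kernel axioms do not force a yes-tuple and a no-tuple to produce distinct triples $(y', k', r)$, since the compressed instance's costs may be rescaled arbitrarily relative to $(y,k)$. Hence there is no small, well-defined target language of the form you propose, and this size-control issue is precisely the new difficulty that the approximate setting introduces over exact cross composition.

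The proof in~\cite{LPRS17} circumvents this by using Dell--van Melkebeek's oracle communication protocols (the ``Complementary Witness Lemma'') rather than plain OR-distillation: the polynomial-time player keeps the large instance $y$ to itself, sends only the $\poly(k)$-bit compressed instance $(y', k')$ to the computationally unbounded second player, receives back an optimal solution $s'$ of $(y', k')$, and then locally runs $\cB$ and (by niceness) compares the cost of the lifted solution to $\alpha r$; only the communication, not the retained instance, must be bounded by $\poly(\log t + \max_i |x_i|)$. Equivalently, one can repair the Fortnow--Santhanam route by letting the polynomial advice consist of pairs (compressed instance, optimal solution of that compressed instance) and having the $\NP/\poly$ verifier for $\overline{L}$ guess the co-instances, recompute $y$ itself, look up the stored optimal solution, and perform the lifting and cost comparison locally. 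Either way, the idea your proposal is missing is that the large composed instance must stay on the verifier's side while only optimal solutions of the small kernel cross the boundary.
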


As stated earlier, our lower bound will be based on the Small Set Expansion Hypothesis (SSEH)~\cite{RS10}. To state the hypothesis, let us first recall the definition of edge expansion; for a graph $G$, the \emph{edge expansion} of a subset of vertices $S \subseteq V_G$ is defined as $\Phi(S) := \frac{E_G(S, V_G \setminus S)}{\wdeg(S)}$. Roughly speaking, SSEH, which was proposed in~\cite{RS10}, states that it is NP-hard to determine whether (completeness) there is a subset of a specified size with very small edge expansion or (soundness) every subset of a specified size has edge expansion close to one. This is formalized below.

\begin{definition}[SSE($\delta, \eta$)]
Given an unweighted regular graph $G$, distinguish between:\begin{itemize}
\item (Completeness) There exists $S \subseteq V_G$ of size $\delta|V_G|$ such that $\Phi(S) \leqs \eta$.
\item (Soundness) For every $S \subseteq V_G$ of size $\delta|V_G|$, $\Phi(S) > 1 - \eta$.
\end{itemize}
\end{definition}

\begin{conjecture}[Small Set Expansion Hypothesis~\cite{RS10}] \label{conj:sseh}
For every $\eta > 0$, there exists $\delta = \delta(\eta) > 0$ such that SSE$(\delta, \eta)$ is NP-hard. 
\end{conjecture}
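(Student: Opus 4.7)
The final statement is the Small Set Expansion Hypothesis of Raghavendra and Steurer, which is a major open conjecture in hardness of approximation; no proof is known and none is forthcoming in this note either, since the paper uses SSEH purely as a hypothesis. Accordingly, the best I can do is describe what a proof attempt would have to look like and pinpoint where it would break down.

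The plan would be a PCP-style reduction from a canonical NP-hard problem such as Label Cover with perfect completeness. For each target $\eta > 0$, one would construct an $N$-vertex regular graph $G$ and a density parameter $\delta = \delta(\eta)$, replacing each Label Cover variable by a gadget---most naturally a long-code or noisy-hypercube test---so that in YES instances some set of $\delta N$ vertices is nearly non-expanding (edge expansion at most $\eta$), while in NO instances every set of $\delta N$ vertices has edge expansion exceeding $1 - \eta$. The completeness direction would exploit ``dictator'' vertex sets inside each gadget, glued together by a consistent Label Cover assignment. The soundness direction would use an invariance principle (in the style of Majority-is-Stablest) to argue that any small non-expanding set must be locally influential on only a few coordinates, so that a Label Cover decoding can be extracted from it, contradicting the starting hardness.

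The \textbf{main obstacle}, and the reason no such proof exists, is that SSEH is known to be at least as strong as the Unique Games Conjecture; proving SSEH would in particular establish UGC. Present PCP techniques, including the recent 2-to-2 Games Theorem and the Grassmann-graph expansion analysis of Khot, Minzer, and Safra, fall short: they give hardness with imperfect completeness and with soundness bounded away from one, rather than the near-perfect completeness and soundness tending to one that SSE requires. Compounding this, the subexponential-time algorithm of Arora, Barak, and Steurer for Small Set Expansion constrains any candidate reduction to quasi-polynomial size blow-up, blocking simple parallel-repetition amplification. A successful attack would therefore require a genuinely new PCP construction that simultaneously achieves near-perfect completeness and soundness approaching one for a small-set expansion predicate---precisely the regime that no current technology can reach. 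This is why the paper, instead of attempting a proof, adopts SSEH (and the strengthening introduced immediately after) as a working assumption from which to derive the non-existence of polynomial-size approximate kernels for Weighted Min $k$-VC.
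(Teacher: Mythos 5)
You are right that this statement is a conjecture, not a theorem: the paper states the Small Set Expansion Hypothesis verbatim from Raghavendra and Steurer and offers no proof, using it (and the strengthening in Conjecture~\ref{conj:strong-sseh}) purely as a hardness assumption for Lemma~\ref{lem:no-kernel}. Your assessment matches the paper's treatment exactly, so there is nothing to compare beyond noting that no proof exists or is attempted.
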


Before we state the variant of SSEH that we will use, let us demonstrate why we need to strengthen the hypothesis. To do so, let us consider the $(2 - \varepsilon)$-factor hardness of approximation of Min $k$-VC as proved in~\cite{GandhiK15}, which our construction will be based on. The reduction takes in an input $G$ to SSE($\delta, \eta$) and simply just outputs $(G, k)$ where $k = \delta |V_G|$. The point is that, in a $d$-regular graph, a set $S$ covers exactly $d(1 + \Phi(S))|S|/2$ edges. This means that, in the completeness case, there is a set $S$ of size $k$ that covers only $d(1 + \eta)k/2$ edges, whereas, in the soundness case, any set $S$ of size $k$ covers at least $d(2 - \eta)k/2$ edges. By selecting $\eta$ sufficiently small, the ratio between the two cases is at least $(2 - \varepsilon)$, and hence~\cite{GandhiK15} arrives at their $(2 - \varepsilon)$-factor inapproximability result.

Now, our cross composition is similar to this, except that we need to be to handle multiple instances at once. More specifically, given instance $G_1, \dots, G_t$ of SSE($\delta, \eta$) where all $G_1, \dots, G_t$ are $d$-regular for some $d$ and $|V_{G_1}| = \cdots = |V_{G_t}|$, we want to produce an instance $(G^*, k)$ where $G^*$ is the disjoint union of $G_1, \dots, G_t$ and $k = \delta |V|$. Once again, the completeness case works exactly as before. The issue lies in the soundness case: even though we know that every $S_i \subseteq V_{G_i}$ of size $k$ has expansion close to one, it is possible that there exists $S_i \subseteq V_{G_i}$ of size much smaller than $k$ that has small expansion. For instance, it might even be that $G_1, \dots, G_t$ each contains a connected component of size $k/t$. In this case, we can take the union of these components and arrive at a set of size $k$ that covers $dk/2$ edges, which is even smaller than the completeness case! In other words, for the composition to work, we want the soundness of SSEH to consider not only $S$'s of size $k$, but also $S$'s of size \emph{at most} $k$. With this in mind, we can formalize our strengthened hypothesis as follows.

\begin{definition}[Strong-SSE($\delta, \eta$)]
Given an unweighted regular graph $G$, distinguish between:\begin{itemize}
\item (Completeness) There exists $S \subseteq V_G$ of size $\delta|V_G|$ such that $\Phi(S) \leqs \eta$.
\item (Soundness) For every $S \subseteq V_G$ of size at most $\delta|V_G|$, $\Phi(S) > 1 - \eta$.
\end{itemize}
\end{definition}

\begin{conjecture}[Strong Small Set Expansion Hypothesis] \label{conj:strong-sseh}
For every $\eta > 0$, there exists $\delta = \delta(\eta) > 0$ such that Strong-SSE$(\delta, \eta)$ is NP-hard. 
\end{conjecture}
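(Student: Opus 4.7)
The plan is to attempt to reduce standard SSE (Conjecture~\ref{conj:sseh}) to Strong-SSE (Conjecture~\ref{conj:strong-sseh}), since the two statements agree on completeness and differ only in the soundness quantifier (``exactly $\delta|V_G|$'' versus ``at most $\delta|V_G|$''). Given $\eta > 0$ and a hypothetical SSE-hard instance $G$ with parameters $(\delta_0, \eta_0)$ where $\delta_0, \eta_0 \ll \eta$, I would build a modified unweighted regular graph $G^\sharp$ on a comparable vertex set such that (i) the SSE completeness witness in $G$ yields a size-$\delta|V_{G^\sharp}|$ set of expansion at most $\eta$ in $G^\sharp$, and (ii) the SSE soundness guarantee in $G$, together with a new ``small-scale'' expansion guarantee built into $G^\sharp$, forces every set of size at most $\delta|V_{G^\sharp}|$ to have expansion strictly above $1-\eta$.

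The technical centerpiece is the construction of a \emph{uniform-scale booster} $H$: an unweighted regular graph on the same vertex set whose edge expansion is uniformly close to one on every set of size strictly less than $\delta_0|V|$, but whose overall contribution to any cut at the target scale $\delta_0|V|$ is proportionally tiny. Natural candidates include random regular graphs of carefully chosen degree, Ramanujan-type graphs, or graphs built via the zig-zag / replacement product; the analytic task is to bound their cut sizes in a scale-dependent way, either spectrally (via eigenvalue separation combined with Cheeger) or combinatorially (via expander mixing and concentration). After setting $G^\sharp = G \cup H$ and adjusting to preserve regularity, the soundness of $G$ would handle sets close to $\delta_0|V|$ in size, while the small-set expansion of $H$ would handle everything strictly smaller.

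I expect the chief obstacle to be the inherent tension between requirements (i) and (ii). The completeness witness $S^{\ast}$ is itself a ``small set'' of fractional size $\delta_0$, so any expander-type bound applied uniformly to all small sets will also apply to $S^{\ast}$ and destroy its low expansion in $G^\sharp$. Overcoming this seems to demand an $H$ that is a small-set expander \emph{only} up to scales strictly below $\delta_0|V|$, which is essentially a restricted form of the very property we are trying to establish, yielding a circular obstacle. Two workarounds I would explore are: (a) a gap-amplification/tensoring step applied to $G$ before the overlay, designed to shift the ``critical scale'' so that $S^{\ast}$ sits well above the booster's expansion regime; and (b) abandoning the black-box reduction and instead revisiting the Raghavendra--Steurer PCP underlying SSE hardness~\cite{RS10}, upgrading its dictatorship test soundness analysis so that the influence-decoding step gives a union-bound guarantee across all fractional scales simultaneously. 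This PCP-level strengthening is where I anticipate the bulk of the new technical work, and it is the step I view as the true bottleneck of the proposal.
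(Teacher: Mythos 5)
There is a fundamental mismatch here: the statement you are trying to prove is a \emph{conjecture}, not a theorem. The paper offers no proof of it --- it is introduced purely as a hardness assumption under which Lemma~\ref{lem:no-kernel} is established, and the author explicitly remarks that, unlike the variant of SSEH whose soundness quantifies over all sets of size in $[\beta\delta|V|,\delta|V|]$ (which is known to be equivalent to the original SSEH via the argument in the full version of~\cite{RST12}), the Strong SSEH as stated is \emph{not} known to follow from Conjecture~\ref{conj:sseh}. Establishing it would be a new result in the SSE literature, well beyond the scope of this note. So the correct ``answer'' for this statement is simply to assume it; attempting a reduction from SSE is attacking an open problem.

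On the merits of your sketch: it is not a proof, and the obstacle you yourself flag is fatal to the overlay construction, not merely inconvenient. Quantitatively, if $G$ is $d_G$-regular and your booster $H$ is $d_H$-regular on the same vertex set, then for any set $S$ the expansion in $G^\sharp = G \cup H$ is a degree-weighted average $\frac{d_G\,\Phi_G(S) + d_H\,\Phi_H(S)}{d_G + d_H}$. To force every set of size below the critical scale up to expansion $> 1-\eta$ using $H$, you need $d_H$ to dominate $d_G$; but the completeness witness $S^\ast$ has fractional size $\delta_0$ and is itself a small set, so $\Phi_H(S^\ast)$ is also close to $1$ (no regular graph can avoid this for \emph{all} sets at a given small scale --- by an averaging/expander-mixing argument a typical set of measure $\delta_0$ has expansion about $1-\delta_0$ in any regular graph), and $S^\ast$'s expansion in $G^\sharp$ is dragged up to roughly $d_H/(d_G+d_H)$, destroying completeness. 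Your workaround (a) does not escape this, since tensoring preserves the fact that the planted set lives at exactly the scale you need the booster to cover; and workaround (b) --- redoing the Raghavendra--Steurer PCP soundness analysis uniformly over all scales --- is precisely the open question, not a step you can defer. You should treat Conjecture~\ref{conj:strong-sseh} as an assumption, as the paper does.
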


We remark that it is known that a strengthening of SSEH where the soundness case is required for all $S$ of size in $[\beta\delta|V|, \delta|V|]$ for any $\beta > 0$ is known to be equivalent to the original SSEH. (See Appendix A.2 of the full version of~\cite{RST12} for a simple proof.) This is closely related to what we want above, except that we need this to holds even for $|S| = o(|V|)$. To the best of our knowledge, the Strong SSEH as stated above is not known to be equivalent to the original SSEH.

\begin{proof}[Proof of Lemma~\ref{lem:no-kernel}]
Let $\varepsilon$ be any number that lies in $(0, 1]$. Let $\eta$ be $\varepsilon/2$, and let $\delta = \delta(\eta) > 0$ be as guaranteed by Conjecture~\ref{conj:sseh}. We will show that Strong-SSE$(\delta, \eta)$ $(2 - \varepsilon)$-gap cross composes\footnote{Note that strictly speaking Strong-SSE$(\delta, \eta)$ is not a language, but rather a promise problem (cf.~\cite{Goldreich06a}). Nevertheless, the notion of gap cross composes extends naturally to promise problems; the only changes are that in the yes case $x_i \in L$ should be changed to $x_i \in L_{\text{YES}}$ and in the no case $x_i \notin L$ should be changed to $x_i \in L_{\text{NO}}$. The result in Lemma~\ref{lem:compose-lowerbound} also holds for this case; for instance, see Lemma 5.11 and Theorem 5.12 of~\cite{LPRS17}, where the gap cross composition also starts from a promise problem (Gap-Longest-Path).} into Min $k$-VC, which together with Lemma~\ref{lem:compose-lowerbound} immediately implies the statement in the lemma.

We define an equivalence relation $R$ on instances of Strong-SSE$(\delta, \eta)$ by $G \sim G'$ iff $|V_G| = |V_{G'}|$ and $\wdeg(G) = \wdeg(G')$. It is obvious that $R$ is polynomial. Given $t$ instances $G_1, \dots, G_t$ from the same equivalence class of $R$ where $n = |V_{G_1}| = \cdots = |V_{G_t}|$ and $d = \wdeg(G_1) = \cdots = \wdeg(G_t)$, we create an instance $(G^*, k)$ of Min $k$-VC by letting $G^*$ be the (disjoint) union of $G_1, \dots, G_t$, $k = \delta n$, and $r = d\delta n(1 + \eta)/2$. We next argue the completeness and soundness of the composition.

{\bf Completeness.} Suppose that, for some $i \in [t]$, there exists $S \subseteq V_{G_i}$ of size $\delta n$ such that $\Phi_{G_i}(S) \leqs \eta$. Then, the number of edges covered by $S$ (in both $G_i$ and $G^*$) is $d \delta n(1 + \Phi(S))/2 \leqs d \delta n(1 + \eta)/2$. In other words, $\opt_{\text{Max }k\text{-VC}}(G^*, k) \leqs r$ as desired. 

{\bf Soundness.} Suppose that, for all $i \in [t]$ and $S \subseteq V_{G_i}$ of size at most $\delta n$, we have $\Phi_{G_i}(S) > (1 - \eta)$. Consider any set $S^* \subseteq V_{G^*}$ of size $\delta n$. Let $S_i$ denote $S^* \cap V_{G_i}$. Observe that the number of edges covered by $S^*$ is
\begin{align*}
\sum_{i \in [t]} d|S_i|(1 + \Phi_{G_i}(S_i))/2 \geqs \sum_{i \in [t]} d|S_i|(2 - \eta)/2 = d \delta n(2 - \eta)/2 \geqs (2 - \varepsilon) r,
\end{align*}
where the first inequality comes from our assumption and the second comes from our choice of $\eta$. Thus, we have $\opt_{\text{Max }k\text{-VC}}(G^*, k) > (2 - \varepsilon)r$ as desired.
\end{proof}

We note here that the above proof produces $G^*$ that is unweighted. As a result, the lower bound also applies for Unweighted Min $k$-VC.

\section{Concluding Remarks}

Let us make a few brief remarks regarding the tightness of running times of our algorithms. 
\begin{itemize}
\item The \W[1]-hardness proofs of Max $k$-VC and Min $k$-VC in~\cite{GuoNW07} also implies that, even in the unweighted case, if we can approximate the problems to within $(1 - 1/n^2)$ and $(1 + 1/n^2)$ factors respectively, then we can solve the $k$-Clique problem with only polynomial overhead in running time. This implies the following lower bounds:
\begin{enumerate}
\item Unless $\W[1] = \FPT$, there is no FPT-AS for Max $k$-VC and Min $k$-VC with running time $\exp(f(k) \cdot o(\log(1/\varepsilon))) \cdot \poly(n)$ for any function $f$ (because this would give an FPT time algorithm for $k$-Clique when plugging in $\varepsilon = 1/n^2$).
\item Unless $k$-Clique can be solved in $g(k) \cdot n^{o(k)}$ time for some function $g$, there is no FPT-AS for Max $k$-VC and Min $k$-VC with running time $O(1/\varepsilon)^{o(k)} \cdot \poly(n)$. Note that such lower bound for $k$-Clique holds under the Exponential Time Hypothesis (ETH)\footnote{ETH states that no $2^{o(n)}$-time algorithm can solve 3SAT~\cite{IP01,IPZ01}.}~\cite{ChenHKX06}.
\end{enumerate}
\item For Max $k$-VC, the reduction that proves $(1 + \delta)$-factor NP-hardness of approximation~\cite{Patrank94} is in fact a linear size reduction from the gap version of 3SAT. As a result, assuming the Gap Exponential Time Hypothesis (Gap-ETH)\footnote{Gap-ETH states that there is no $2^{o(n)}$-time algorithm that can distinguish between a fully satisfiable 3CNF formula and one which is not even $0.999$-satisfiable~\cite{Dinur16,MR17}.}, there is no FPT-AS that runs in time $f(1/\varepsilon)^{o(k)} \cdot \poly(n)$ for any function $f$. Under the weaker ETH, a lower bound of the form $f(1/\varepsilon)^{o(k/\poly\log k)} \cdot \poly(n)$ for any $f$ can be achieved via nearly linear size PCP~\cite{Dinur07}. (Note that we do not know any lower bound of this form for Min $k$-VC; in particular, it is not known whether Min $k$-VC is NP-hard to approximate even for a factor of 1.0001.)
\end{itemize}

An interesting remaining open question is to close the gap between the (polynomial time) approximation algorithms and hardness of approximation for Max $k$-VC. On the algorithmic front, we note that Austrin \etal~\cite{ABG13} further exploited the techniques developed by Raghavendra and Tan~\cite{RT12} to achieve several improvements. Most importantly, they show that, for Max 2SAT with cardinality constraint, if the cardinality constraint is $x_1 + \cdots + x_n = n/2$ (i.e. $k = n/2$), then an $0.94$-approximation can be achieved in polynomial time. (In particular, the ratio here is the same as the ratio of the Lewin-Livnat-Zwick algorithm for Max 2SAT without cardinality constraint~\cite{LewinLZ02}; see also~\cite{Aus07,Sjo09}. Note that this ratio is still different from the hardness from~\cite{AKS11}.) This specific case is often referred to as \emph{Max Bisection 2SAT}. Unfortunately, the algorithm does not naturally\footnote{In particular, the rounding algorithm involves scaling the bias of the variables (see Section 6 of~\cite{ABG13}). For \emph{Max Bisection 2SAT}, the sum of the bias is zero and hence scaling retains the sum. However, when the sum is non-zero, scaling changes the sum and hence the rounding algorithm produces a subset of size not equal to $k$.} extend to the case where $k \ne n/2$ and hence it is unclear how to employ this algorithm for Max $k$-VC.

On the hardness of approximation front, we remark that the hardness that follows from~\cite{AKS11} holds even for the \emph{perfect completeness} case. That is, even when we are promised that there is a vertex cover of size $k$, it is still hard to find $k$ vertices that cover $0.944$ fraction of the edges. (See Appendix~\ref{app:aks}.) Interestingly, there is an evidence that this perfect completeness case is easier: Feige and Langberg~\cite{FL01} shows that their algorithm achieves 0.8-approximation in this case, which is better than $(0.75 + \delta)$-approximation that their algorithm yields in the general case. In fact, we can even get 0.94-approximation in this case as follows. First, we follow the kernelization for Vertex Cover~\cite{ChenKJ01} based on the Nemhauser-Trotter theorem~\cite{NemhauserT74}: on input graph $(G, k)$, this gives a partition $V_0, V_{1/2}, V_1$ such that there exists a vertex cover $S$ of size $k$ such that $V_1 \subseteq S \subseteq V_{1/2} \cup V_1$. Moreover, the Nemhauser-Trotter theorem also ensures that $|V_{1/2}| = 2 \cdot (k - |V_1|)$. This means that we can restrict ourselves to the graph induced by $V_{1/2}$ and applies the aforementioned Max Bisection 2SAT from~\cite{ABG13}. This indeed gives us a $0.94$-approximation as desired. These suggest that it might be that the perfect completeness case is easier to approximate; thus, it would be interesting to see whether there is any way to construct harder instances with imperfect completeness.

\bibliography{main}
\bibliographystyle{alpha}

\appendix

\section{Inapproximability of Max $k$-VC from~\cite{AKS11}} \label{app:aks}

In this section, we briefly sketch how Austrin, Khot and Safra's proof of inapproximability of Vertex Cover and Independent Set in bounded degree graphs~\cite{AKS11} immediately implies a 0.944-factor hardness of approximation for Max $k$-VC. Note that we decide to include this since it does not seem to appear anywhere yet.

Let $\Phi$ denote the cumulative density function of the standard normal
distribution and, for any $\rho \in [-1, 1], \mu \in [0, 1]$, let $\Gamma_{\rho}(\mu)$ denote $\Pr[X \leqs \Phi^{-1}(\mu) \wedge Y \leqs \Phi^{-1}(\mu)]$ where $X, Y$ are normal random variables with means 0, variances 1 and covariance $\rho$. The main intermediate result of~\cite{AKS11} is the following:

\begin{theorem}[Theorem 1 from \cite{AKS11}]
For any $q \in (0, 1/2)$ and any $\varepsilon > 0$, it is UG-hard to, given a graph $G = (V_G, E_G)$, distinguish between the following two cases.
\begin{itemize}
\item (Completeness) $G$ contains an independent set of size at least $q \cdot |V_G|$.
\item (Soundness) For any subset $T \subseteq V_G$, the number of edges with both endpoint in $T$ is at least $|E_G| \cdot \left(\Gamma_{-q/(1 - q)}(\mu) - \varepsilon\right)$ where $\mu = |T|/|V_G|$.
\end{itemize}
\end{theorem}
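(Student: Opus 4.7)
The plan is to follow the standard long-code / invariance-principle template for UGC-hardness, with bias $q$ and noise parameter chosen so that Borell's Gaussian isoperimetric inequality yields exactly $\Gamma_{-q/(1-q)}$. I would start from a Unique Games instance $\mathcal{U}=(U\cup V, E, [L], \{\pi_e\})$ which, under UGC, is hard to distinguish between $(1-\eta)$-satisfiable and $\eta$-satisfiable for an arbitrarily small $\eta=\eta(\varepsilon)$. The output graph $G$ has vertex set $(U\cup V)\times\{-1,1\}^L$, where each cloud $\{w\}\times\{-1,1\}^L$ carries the $q$-biased product measure $\mu_q^L$ (so a coordinate equals $-1$ with probability $q$). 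The theorem is stated for unweighted graphs, but a standard replication/discretization turns any weighted construction with polynomially bounded weights into an unweighted one with negligible loss.

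Edges of $G$ are generated by a correlated-pair dictatorship test: pick a random UG edge $e=(u,v)$ with permutation $\pi_e$, sample $x\sim\mu_q^L$, and generate $y$ coordinate-wise so that the pair $(x_{\pi_e(i)},y_i)$ has the unique distribution on $\{-1,+1\}^2$ in which both marginals are $q$-biased and the events $\{x_{\pi_e(i)}=-1\}$ and $\{y_i=-1\}$ are \emph{disjoint}; a direct computation shows this joint has Pearson correlation exactly $\rho=-q/(1-q)$. For completeness, fix a labeling $\sigma$ satisfying almost all of $\mathcal{U}$ and take $T^{\star}=\{(w,x):x_{\sigma(w)}=-1\}$. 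Each cloud contributes $\mu_q^L$-measure $q$ to $T^{\star}$, and for every satisfied UG-edge the disjointness above forbids both endpoints of a test-edge from lying in $T^{\star}$; a short cleanup removing the edges of unsatisfied constraints and moving to the unweighted graph yields an independent set of size at least $q\cdot|V_G|$.

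For soundness, I would write the relative number of $G$-edges inside a subset $T\subseteq V_G$ as
\[
  \mathbb{E}_{(u,v)\in E}\,\mathbb{E}_{(x,y)}[f_u(x)\,f_v(y)],
\]
where $f_w\colon\{-1,1\}^L\to\{0,1\}$ is the indicator of $T$ on cloud $w$. Let $\mu_w=\mathbb{E}_{\mu_q^L}[f_w]$ and $\mu=\mathbb{E}_w[\mu_w]=|T|/|V_G|$. Under the assumption that every $f_w$ has all degree-$d$ influences below a threshold $\tau$, the Mossel--O'Donnell--Oleszkiewicz invariance principle transfers the inner expectation to Gaussian space, and Borell's isoperimetric inequality then lower-bounds it by $\Gamma_\rho(\mu_u,\mu_v)-o_{\tau,d}(1)$. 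Averaging over $(u,v)\in E$ and using the convexity of $\Gamma_\rho$ in each argument (together with a standard symmetrization that collapses to a single measure) delivers the claimed bound $\Gamma_{-q/(1-q)}(\mu)-\varepsilon$.

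The remaining and most delicate step is to discharge the low-influence assumption. Here I would use the standard UG-decoding: if a non-negligible fraction of clouds $w$ contain a coordinate $i$ with $\mathrm{Inf}_i^{\leq d}(f_w)\geq\tau$, define a randomized labeling $\sigma(w)$ by sampling uniformly from this influential list; a second-moment calculation along each UG edge then shows that $\sigma$ satisfies $\Omega_{\tau,d}(1)$ fraction of constraints, contradicting UG-hardness once $\eta=\eta(\varepsilon,\tau,d)$ is chosen small enough. The main obstacle is quantitative rather than conceptual: one must choose $\eta,\tau,d,L$ in the correct order so that the invariance-principle error, the averaging/convexity loss, and the UG decoding loss together fit inside the single slack $\varepsilon$, while verifying that the above pairwise distribution has Gaussian analog with correlation exactly $-q/(1-q)$ so Borell's bound gives $\Gamma_{-q/(1-q)}$ and not some auxiliary constant.
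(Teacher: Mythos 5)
The paper does not prove this statement at all: it is quoted verbatim as Theorem~1 of~\cite{AKS11} and used as a black box (the appendix only derives the $0.944$ Max $k$-VC corollary from it). Your proposal is therefore a reconstruction of the external AKS proof rather than an alternative to anything in this paper, and as such it is essentially faithful to how~\cite{AKS11} actually argue: a $q$-biased long-code graph, a two-query test whose atom forbids both queried coordinates from being ``in'' (your computation of the Pearson correlation $-q^2/(q(1-q)) = -q/(1-q)$ is the right one), completeness via dictator cuts, soundness via the Mossel--O'Donnell--Oleszkiewicz invariance principle plus the negative-correlation form of Borell's inequality, and influence-based decoding of a UG labeling. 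Two details you gloss over deserve mention. First, passing from $\mathbb{E}_{(u,v)}[\Gamma_\rho(\mu_u,\mu_v)]$ to $\Gamma_\rho(\mu,\mu)$ is not a ``standard symmetrization'' freebie: it requires that $\Gamma_\rho$ be jointly convex for $\rho\leqs 0$, which is a separate (if short) lemma that AKS prove explicitly. Second, the AKS graph is built on clouds over one side of the bipartite UG instance, with test edges between two clouds $v,w$ sharing a common neighbor $u$ (composing $\pi_{uv}^{-1}\circ\pi_{uw}$), rather than directly across a single UG edge as you describe; this matters for where the influential coordinates are decoded and for getting consistency of the labeling, though for unique (bijective) constraints either variant can be made to work. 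Neither issue is a conceptual gap, but both need to be filled in for a complete proof.
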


This means that, in the completeness case, there is a vertex cover of size at most $(1 - q) \cdot |V_G|$. On the other hand, in the soundness case, if we consider any subset $S \subseteq V_G$ of size at most $(1 - q) \cdot |V_G|$, then the number of edges \emph{not} covered is exactly the same as the number of edges with both endpoints in $(V_G \setminus S)$, which is at least $(\Gamma_{-q/(1 - q)}(q) - \varepsilon) \cdot |E_G|$. In other words, for any $q \in (0, 1/2)$, it is UG-hard to approximate Max $k$-VC to within a factor of $(1 - \Gamma_{-q/(1 - q)}(q) + \varepsilon)$ for any $\varepsilon > 0$. That is, Austrin \etal's result implies the following:

\begin{corollary}
Let $\alpha_{\text{AKS}} = \sup_{q \in (0, 1/2)} (1 - \Gamma_{-q/(1 - q)}(q))$. For every $\varepsilon > 0$, it is UG-hard to approximate Max $k$-VC to within a factor of $\alpha_{\text{AKS}} + \varepsilon$.
\end{corollary}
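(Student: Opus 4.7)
The plan is to directly translate the gap instance produced by the AKS theorem into a gap instance for Max $k$-VC by taking complements: a vertex cover corresponds to the complement of an independent set, and the edges not covered by a subset $S$ are exactly the edges with both endpoints in $V_G \setminus S$.

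First I would fix $q \in (0, 1/2)$ and $\varepsilon' > 0$, run the AKS theorem with these parameters to obtain a graph $G$, and set $k := \lfloor (1 - q)|V_G| \rfloor$ (so that the complement of a $k$-vertex set has size $\approx q|V_G|$; the integer rounding contributes only an $o(1)$ error which can be absorbed into $\varepsilon$). In the completeness case, an independent set $I$ of size $q|V_G|$ gives a vertex cover $V_G \setminus I$ of size at most $k$, which covers every edge of $G$. Hence $\optmax(G, k) = |E_G|$.

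For the soundness case, I would take any $S \subseteq V_G$ with $|S| = k$ and apply the AKS soundness guarantee to $T := V_G \setminus S$ with $\mu = |T|/|V_G| = q$. Since every edge not covered by $S$ is an edge with both endpoints in $T$, we obtain that the number of edges covered by $S$ is at most $|E_G| \cdot \left(1 - \Gamma_{-q/(1-q)}(q) + \varepsilon'\right)$. Thus $\optmax(G, k) \leqs \left(1 - \Gamma_{-q/(1-q)}(q) + \varepsilon'\right) \cdot |E_G|$, giving an inapproximability factor of $1 - \Gamma_{-q/(1-q)}(q) + \varepsilon'$ for this particular $q$.

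To finish, I would choose $q^* \in (0, 1/2)$ such that $1 - \Gamma_{-q^*/(1-q^*)}(q^*) \geqs \alpha_{\text{AKS}} - \varepsilon/2$, and take $\varepsilon' = \varepsilon/2$ (plus the $o(1)$ correction from the rounding in the definition of $k$). Combining the two bounds yields that distinguishing $\optmax(G, k) = |E_G|$ from $\optmax(G, k) \leqs (\alpha_{\text{AKS}} + \varepsilon) \cdot |E_G|$ is UG-hard, which is exactly the claimed approximation-hardness. There is no real obstacle here beyond bookkeeping of the additive error terms; the content of the corollary is entirely in the AKS theorem, and the only thing to do is the completeness/soundness translation and the optimization over $q$.
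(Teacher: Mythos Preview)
Your proposal is correct and follows exactly the same approach as the paper: pass to complements so that a large independent set becomes a vertex cover of size $k$ in the completeness case, and in the soundness case apply the AKS lower bound to $T = V_G \setminus S$ to bound the number of uncovered edges, then optimize over $q$. The paper's argument is the brief paragraph immediately preceding the corollary, and your write-up simply fills in the bookkeeping (integer rounding of $k$, explicit choice of $q^*$ and $\varepsilon'$) that the paper leaves implicit.
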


Numerically, $\alpha_{\text{AKS}}$ lies between $0.943$ and $0.944$. Thus, it is UG-hard to approximate Max $k$-VC to within 0.944 factor.

\end{document}